  \providecommand\BibTeX{{%
    \normalfont B\kern-0.5em{\scshape i\kern-0.25em b}\kern-0.8em\TeX}}}
\newcommand{\E}{\mathbb E}
\newcommand{\tsc}[1]{\textsuperscript{#1}} 
\DeclareMathOperator*{\argmin}{arg\,min}
\begin{document}

\title{Anytime-Valid Confidence Sequences in an Enterprise A/B Testing Platform}

\settopmatter{printacmref=false}

\author{Akash V. Maharaj\tsc{1}, Ritwik Sinha\tsc{2}, David Arbour\tsc{2}, Ian Waudby-Smith\tsc{3}, Simon Z. Liu\tsc{1}, }
\author{Moumita Sinha\tsc{1}, Raghavendra Addanki\tsc{2}, Aaditya Ramdas\tsc{3}, }
\author{Manas Garg\tsc{1}, Viswanathan Swaminathan\tsc{2}}
\affiliation{
 \institution{\tsc{1} Adobe Inc., San Jose, USA.}
  \institution{\tsc{2} Adobe Research, San Jose, USA.}
  \institution{\tsc{3} Carnegie Mellon University, Pittsburgh, USA.}
 \country{}
}

\begin{CCSXML}
<ccs2012>
<concept>
<concept_id>10002950.10003648.10003662.10003666</concept_id>
<concept_desc>Mathematics of computing~Hypothesis testing and confidence interval computation</concept_desc>
<concept_significance>500</concept_significance>
</concept>
</ccs2012>
\end{CCSXML}

\ccsdesc[500]{Mathematics of computing~Hypothesis testing and confidence interval computation}

\keywords{A/B testing, randomized experiments, anytime valid, confidence sequence, optional stopping, sequential hypothesis testing, p-values, peeking}
\renewcommand{\shortauthors}{Maharaj et al.}

\newcommand{\acmVersion}{0}

\newcommand{\appendixref}[1]{%
    \ifnum\acmVersion=1{#1 of our online preprint \citep{LongerPaper}}%
    \else{#1}%
    \fi%
}

\begin{abstract}
A/B tests are the gold standard for evaluating digital experiences on the web. However, traditional ``fixed-horizon'' statistical methods are often incompatible with the needs of modern industry practitioners as they do not permit continuous monitoring of experiments. Frequent evaluation of fixed-horizon tests (``peeking'') leads to inflated type-I error and can result in erroneous conclusions. 
We have released an experimentation service on the Adobe Experience Platform based on \emph{anytime-valid confidence sequences}, allowing for continuous monitoring of the A/B test and data-dependent stopping. We demonstrate how we adapted and deployed asymptotic confidence sequences in a full featured A/B testing platform, describe how sample size calculations can be performed, and how alternate test statistics like ``lift'' can be analyzed. On both simulated data and thousands of real experiments, we show the desirable properties of using anytime-valid methods instead of traditional approaches.

\end{abstract}

\maketitle

\section{Introduction}

A/B testing (also referred to as randomized experimentation) has become ubiquitous in the optimization of digital experiences on websites, mobile applications, and in emails \cite{kohavi2020trustworthy}. The rapid growth and commoditization of experimentation platforms has made it easier than ever for developers, product managers, marketers, analysts, designers, and business leaders to use A/B testing tools. While this ubiquity has led to the growth of data-driven decision-making within organizations, it also means that practitioners with less experience with the nuances of statistical inference have access to powerful tools that could lead them astray if used improperly.

\begin{figure}
    \centering
    \includegraphics[width=\linewidth]{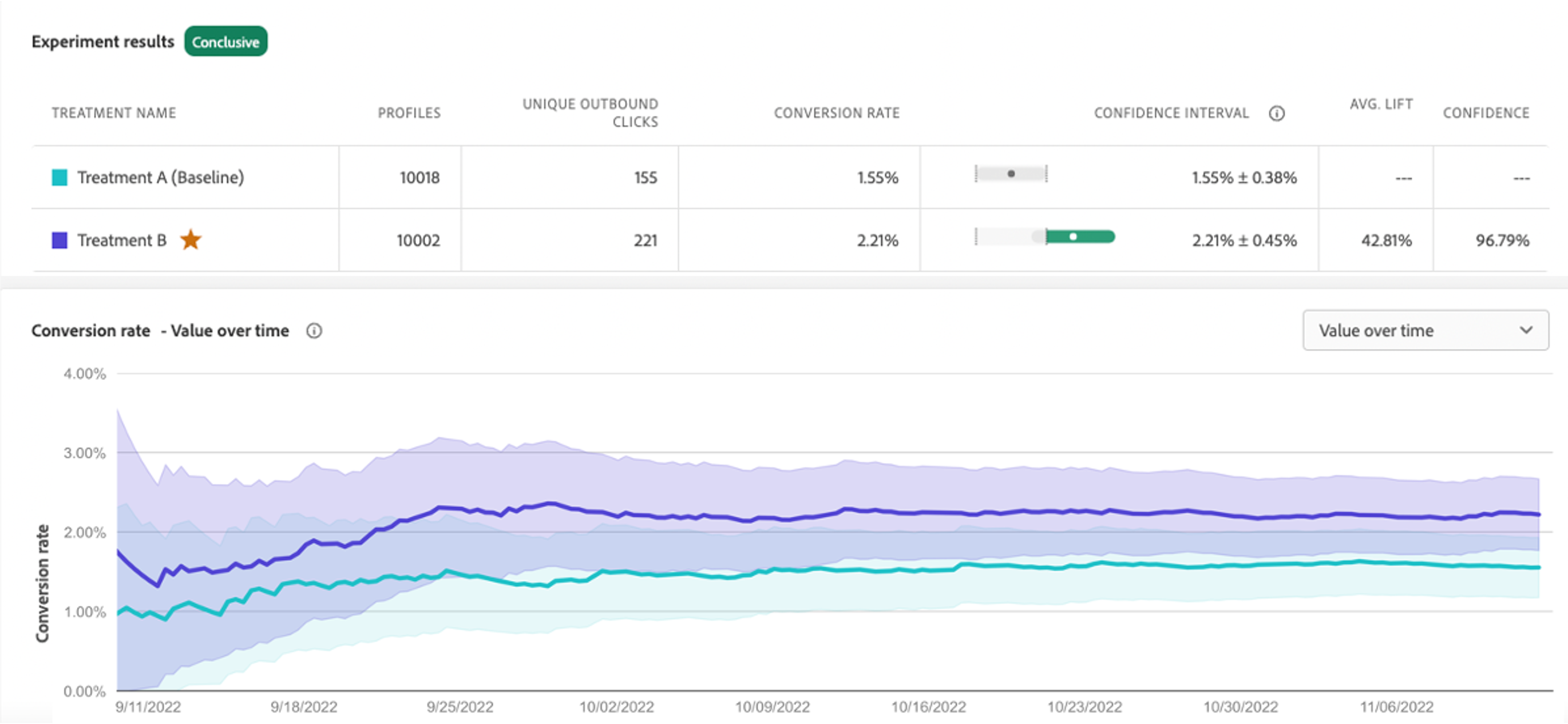}
    \caption{Experiment being analyzed with Anytime-Valid Confidence Sequences with the Experimentation Service on Adobe Experience Platform.}
    \label{fig:prod-image}
\end{figure}

This democratization of A/B tests calls for procedures that meet the needs of practitioners, while also protecting them from statistically flawed conclusions. Practitioners often want to monitor tests continually, stop tests early, or continue tests to collect more evidence. However, such ``peeking'' or ``early stopping'' is known to inflate type-I error in naive fixed horizon methodologies~\citep{wald1945sequential,robbins1970statistical, johari2015can, johari2017peeking, johari2022always, deng2016continuous, howard2021time}. In a classical fixed-horizon A/B test the user must: (1) specify a hypothesis, (2) select a sample size (often based on minimum detectable effects (MDE) and the desired type-I and type-II error), and (3) \emph{only when the pre-specified sample size is reached}, compare the $p$-value (or confidence interval) to the appropriate threshold. Performing comparisons multiple times before reaching the sample size, or collecting more data and performing additional comparisons drastically inflates the type-I error. 

As we discuss in Section \ref{sec:related}, there are multiple approaches that allow for optional stopping, ranging from group sequential methods\cite{pocock1977group, o1979multiple, gordon1983discrete} (commonly employed in clinical trials), to Bayesian methods \cite{stucchio2015bayesian, deng2016continuous}, and more recently, a burgeoning literature on ``anytime-valid'' methods \cite{johari2015can, johari2017peeking, howard2021time, waudby2020estimating} which are based on martingale techniques. Anytime-valid procedures have several desirable properties - allowing for continuous monitoring, adaptive stopping, or continuation, while controlling type-I error at all times. Recently, \citet{waudby2021doubly} introduced so-called ``Asymptotic Confidence Sequences'' (AsympCS) --- an anytime-valid analogue of Central Limit Theorem-based confidence intervals --- that are non-parametric, asymptotically narrow, and easy to compute. This paper describes the adaptation and implementation of AsympCS into an enterprise A/B testing platform. 

Our contributions include the following: 
1) we apply asymptotic confidence sequences to average treatment effect estimation in A/B tests without recorded propensity scores, 2) we provide a sample size calculator for AsympCS-based tests,
3) we compute AsympCSs for alternate test statistics, the ``lift'' (i.e.~relative improvement between experiences A and B),  
4) using simulations, we show that AsympCS has attractive properties when compared to alternative strategies commonly used in the industry, and 5) on thousands of real A/B tests conducted within the product, we elucidate the advantages of using AsympCSs over fixed horizon tests.  

\section{Related Work}
\label{sec:related}

Traditionally, peeking-induced type-I error inflation is addressed by prespecifying the peeking times, e.g. in so-called ``Group Sequential Designs'' (GSD) \cite{pocock1977group, o1979multiple, gordon1983discrete}. 
In clinical trials, where early stopping may be necessary for safety and ethical reasons, the Food and Drug Administration~\cite{fdaguidance}, recommend pre-specification of the design and strict control of peeking times. Unfortunately, such restrictions are typically infeasible in online experimentation.

The parallel literature on \emph{confidence sequences} (CS) ---pioneered by Herbert Robbins and colleagues --- focuses on methods that do not require pre-specifying peeking times at all \citep{darling1967confidence,robbins1970statistical,lai1976boundary,lai1976confidence}. 
Formally, a CS for a parameter $\theta$ is a sequence $\bar{C}_n$ of sets, such that, $\Pr(\forall n \in \mathbb{N}^+, \theta \in \bar{C}_n) > (1 - \alpha)$.
These methods were applied to A/B tests by \citet{johari2017peeking} with an emphasis on Robbins' mixture sequential probability ratio test (mSPRT). This method has excellent properties in practice but relies on a parametric model for validity, requiring strong assumptions that are unrealistic in practice. Subsequent work focused on extending CSs to richer nonparametric problems, \citep{howard2021time} such as those for bounded random variables \citep{jun2019parameter,waudby2020estimating,orabona2021tight}. See \citet{ramdas2022game} for a more detailed survey. Most of the CS literature focuses on \emph{non-asymptotic} methods which have three major disadvantages even for fixed-horizon settings: (a) they require strong assumptions, such as a parametric model or known moment generating functions
\citep{robbins1970statistical,howard2021time,waudby2020estimating,waudby2020confidence,wang2022catoni}, (b) they are typically wider than asymptotic methods based on the central limit theorem, and (c) they take different forms for different problems, whereas the central limit theorem yields a universal and closed-form (trivial-to-compute) expression. The CSs of \citet{waudby2021doubly} overcomes these issues, at the cost of satisfying anytime validity only in an asymptotic sense \citep{waudby2021doubly}, and for this reason, we adopt their AsympCS framework in the present work.

Some experimentation solutions \citep{google_bayesian, stucchio2015bayesian} opt for Bayesian Hypothesis tests (\citep[Chapter 11]{wasserman2004all}) instead of frequentist ones. For example, the Bayes Factor has also been proposed as a way of enabling continuous monitoring in A/B tests \cite{deng2016continuous}. Of course, these take an different philosophical approach to uncertainty quantification with different statistical guarantees. In addition, note that these methods rely on parametric models, and their performance is sensitive to the choice of prior. 

\section{Asymptotic Confidence Sequences}
Next we describe AsympCSs for the average treatment effect (ATE), as well as a mechanism to compute the hypothesized sample size of the test and an AsympCS for the lift. Technical details may be found in the appendices. 

\subsection{ATE With Empirical Propensities}
\label{sec:ate}
The primary estimand considered in A/B tests is the average treatment effect (ATE). \citet[Theorem 3.1]{waudby2021doubly} provides an AsympCS for the ATE in randomized experiments using doubly robust estimation. However, if propensity scores are not recorded (a common occurrence in industry), this estimator cannot be directly computed, and the A/B test must be treated as an observational study, where \citet[Theorem 3.2]{waudby2021doubly} can be employed instead. Replacing true propensity scores with their empirical estimates results in an elegant and simple-to-compute AsympCS for the ATE:
\small
\begin{align}
    &\bar{C}_{n}^{\text{Asymp}} \coloneqq \Biggl( (\hat{\mu}_{1;n} - \hat{\mu}_{0;n}) \pm \beta(n, \alpha, \rho) \times \nonumber \\
    &\sqrt{\frac{n}{n-1} \left[ \frac{n}{n_0}(\hat{\sigma}_{0;n}^2 + \hat{\mu}_{0;n}^2) + \frac{n}{n_1}(\hat{\sigma}_{1;n}^2 + \hat{\mu}_{1;n}^2) - (\hat{\mu}_{1;n} - \hat{\mu}_{0;n})^2\right]} \Biggr)
    \label{eqn:asympcs}
\end{align}
\normalsize
where $\beta(n, \alpha, \rho) = \sqrt{\frac{2(n\rho^2 + 1)}{n^2 \rho^2} \log \left (\frac{\sqrt{(n \rho^2 + 1)}}{\alpha} \right )}$, and $(\hat{\mu}_{0;n}, \hat{\mu}_{1;n})$ and $(\hat{\sigma}_{0;n}, \hat{\sigma}_{1;n})$ are the running means and standard deviations in the two treatment arms (details can be found in Appendix~\appendixref{\ref{sec:appendix_ate}}). Notice that \eqref{eqn:asympcs} only depends on individual running counts, and sample averages. These calculations are simple and parallelizable, making them well-suited to large-scale analytics engines such as Adobe Analytics.\footnote{Adobe, the Adobe logo, Adobe Experience Platform, Adobe Analytics, Adobe Journey Optimizer, and Customer Journey Analytics are either registered trademarks or trademarks of Adobe in the United States and/or other countries.} 

\subsection{Hypothesized sample size}
\label{sec:samplesize}
In online experiments, one often wishes to understand the sample size required to reach significance for a given experiment. 
In the fixed-horizon setting, this is achieved via power calculations which prescribe the necessary size of an experiment given an \emph{a priori} guess of the population variance of a metric and a desired minimum detectable effect. 
In the anytime setting we can analogously compute the \textit{hypothesized sample size}, defined as
\small
\begin{align}
\inf_n \left( \Pr(\theta_{H_0} \not\in \bar{C}_n | H_1) >= 1 - \beta\right),
\label{eq:stopping}
\end{align} 
\normalsize
\textit{i.e.}, under $H_1$ (where this is a difference in means equal to the MDE), Eq.~\ref{eq:stopping} is the sample size $n^{*}$ so that the probability of rejecting $H_0$ is at least $1 - \beta$. Here, rejection occurs when the CS $\bar{C}_n$ excludes the null effect $\theta_{H_0}$ (typically zero). Estimating \eqref{eq:stopping} proceeds by replacing $\hat{\sigma}^2$ with a prior guess made by the analyst, and then noting that the earliest possible time period is entailed by the time at which the $\beta$ quantile under $H_1$ is greater than or equal to the $1 - \alpha$ quantile under $H_0$, and that the CS under the null and alternate hypotheses are monotonic with respect to $n$.
Solving \eqref{eq:stopping} can be reduced to a simple convex optimization problem (see Appendix~\appendixref{\ref{sec:appendix_samplesize}}), for which standard root finding procedures provide efficient solutions.

\subsection{Test Statistic for Lift}
While the ATE is a common target estimand, practitioners are often interested in estimating the \emph{relative} treatment effect (or ``lift''), formally defined as $(\mu_1/\mu_0 - 1)$. Unfortunately, it is not immediately obvious how one can write an estimator for lift based on sample averages, and thus we cannot directly apply the techniques of \citet{waudby2021doubly}.  

We develop an AsympCS for this test statistic as follows: (1) Begin with the logarithm of the ratio of the treatment means. (2) Use the fact that if $X$ is a continuous random variable such that $\Pr(X < \ell) = \alpha/2$, then, for a strictly monotonic increasing function $g: \mathbb{R} \rightarrow \mathbb{R}$, $\Pr(g(X) < g(\ell)) = \alpha/2$. (3) Using the one-sided AsympCS from \cite{waudby2021doubly} monotonicity of $\log(\cdot)$, generate a one-sided AsympCS for $\log \mu_i$. (4) Generate the AsympCS for the difference $(\log \mu_1 - \log \mu_0)$ via union bounds. (5) Finally, apply another monotonic transformation $\exp\{\cdot\}$ to arrive at the AsympCS for the lift: 
\small
\begin{equation}
    \bar{C}_{n}^{\text{Lift}} \coloneqq \Biggl[ \frac{\hat{\mu}_{1;n} - \hat{\sigma}_{1;n} \beta(n_1, \alpha, \rho)}{\hat{\mu}_{0;n} + \hat{\sigma}_{0;n} \beta(n_0, \alpha, \rho)} - 1, \frac{\hat{\mu}_{1;n} + \hat{\sigma}_{1;n} \beta(n_1, \alpha, \rho)}{\hat{\mu}_{0;n} - \hat{\sigma}_{0;n} \beta(n_0, \alpha, \rho)} - 1 \Biggr],
\end{equation}
\normalsize
where $\hat{\mu}_{i;n}$, $\hat{\sigma}_{i;n}$, and $\beta(n_i,\alpha,\rho)$ are defined as before. See Appendix~\appendixref{\ref{sec:lift}} for a a detailed proof.

\section{Experiments}
\label{sec:experiments}
In the following two subsections we compare the performance of \textbf{AsympCS} defined in the previous section to several baseline methodologies. The baselines are 1) \textbf{FHT-peeking}: a classic fixed-horizon $z$-test, where the $p$-value is (improperly) recomputed at regular intervals, to simulate peeking. 2) \textbf{LDM} - Lan-deMets' Group Sequential Design \cite{gordon1983discrete, demets1994interim} with the Pocock alpha spending function. 3) \textbf{mSPRT} - the mSPRT test used in \cite{johari2017peeking}. 
4) \textbf{BHT-uninformed} - a Bayesian hypothesis test where a decision-theoretic approach is used, based on the expected loss function defined in \citep{stucchio2015bayesian}. 5) \textbf{BF-uninformed} A Bayes factor approach~\cite{deng2016continuous}. For Bayesian methods, we use Beta distributions for all priors, as would typically be the case for an industrial A/B testing platform. Full details for each method are provided in Appendix~\appendixref{\ref{sec:baselines}}. 

\subsection{Type-I Error}
\label{sec:t1error}
\begin{figure}
  \centering
  \includegraphics[width = 0.85\linewidth]{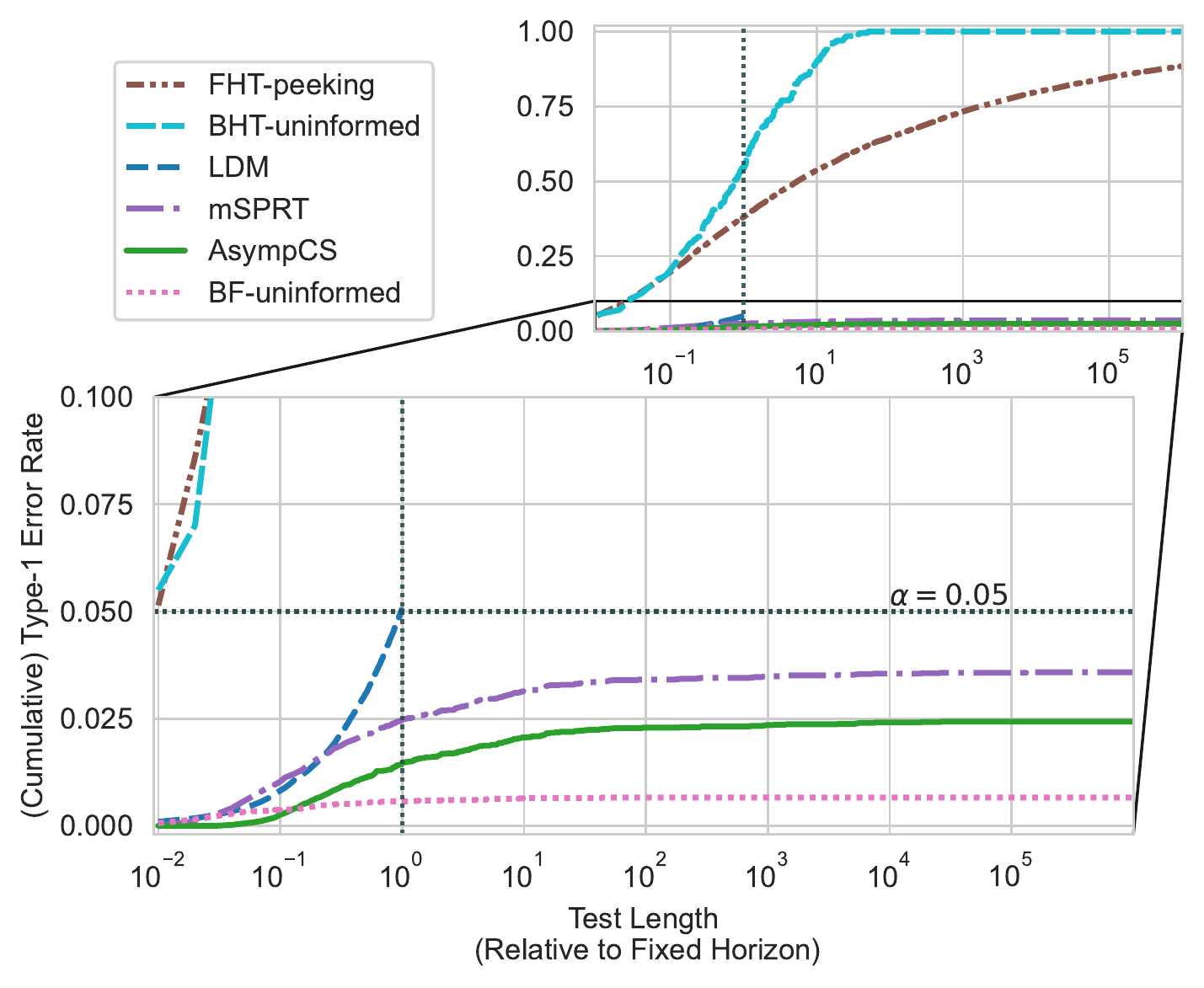}
  \vspace{-4mm}
  \caption{Cumulative type-I error over 10,000 simulated A/A experiments with binary responses, where each arm has a conversion rate of $\theta_0 = 0.1$. See Section \ref{sec:experiments} for methodology descriptions and discussion.} 
  \vspace*{-4mm}
  \label{fig:miscoverage}
\end{figure}

First, we consider the type-I error of the different approaches. We simulate 10,000 ``A/A'' tests with binary outcomes, \textit{i.e.}, tests where both arms have the same conversion rate (10\%). We continuously monitor the test, and a decision to stop is counted as a type-I error. We strongly caution that Bayesian methods are not designed to controls such a type-I error; nonetheless we believe type-I error control is valuable in general purpose industry applications. 

Figure \ref{fig:miscoverage} presents the results of these simulations. The fixed horizon run length assumes an absolute MDE of 1\%, at 80\% power and $\alpha =5\%$. As we can see, FHT-peeking does not control type-I error (asymptotically approaching $1$). BHT-uninformed does not attempt to control type-I error and indeed, it does not (see Appendix~\appendixref{\ref{sec:additional-experimental-results}} for detailed comparisons).

Conversely, the Bayes Factor (BF-uninformed) method has tight type-I error control --- in an A/A test, there is no evidence to favor a hypothesis that the test arms have different conversion rates, and Bayes Factor never exceeds a threshold of $1/\alpha$ \cite{deng2016continuous, rouder2014optional}, though this depends on choice of priors \cite{de2021optional}. The group sequential LDM method can only be run up to a fixed horizon test length, and it performs exactly as it should - with a type-I error rate of exactly 5\% at the fixed horizon. Finally, as expected, both the mSPRT and AsympCS methods control type-I error below the threshold of 5\%. However, we note that the former does not have strict type-I error guarantees outside the Gaussian setting. 

\subsection{Statistical Power and Stopping Time}
\label{sec:t2error}

\begin{figure}[t]
  \centering
  \includegraphics[width = 0.9\linewidth]{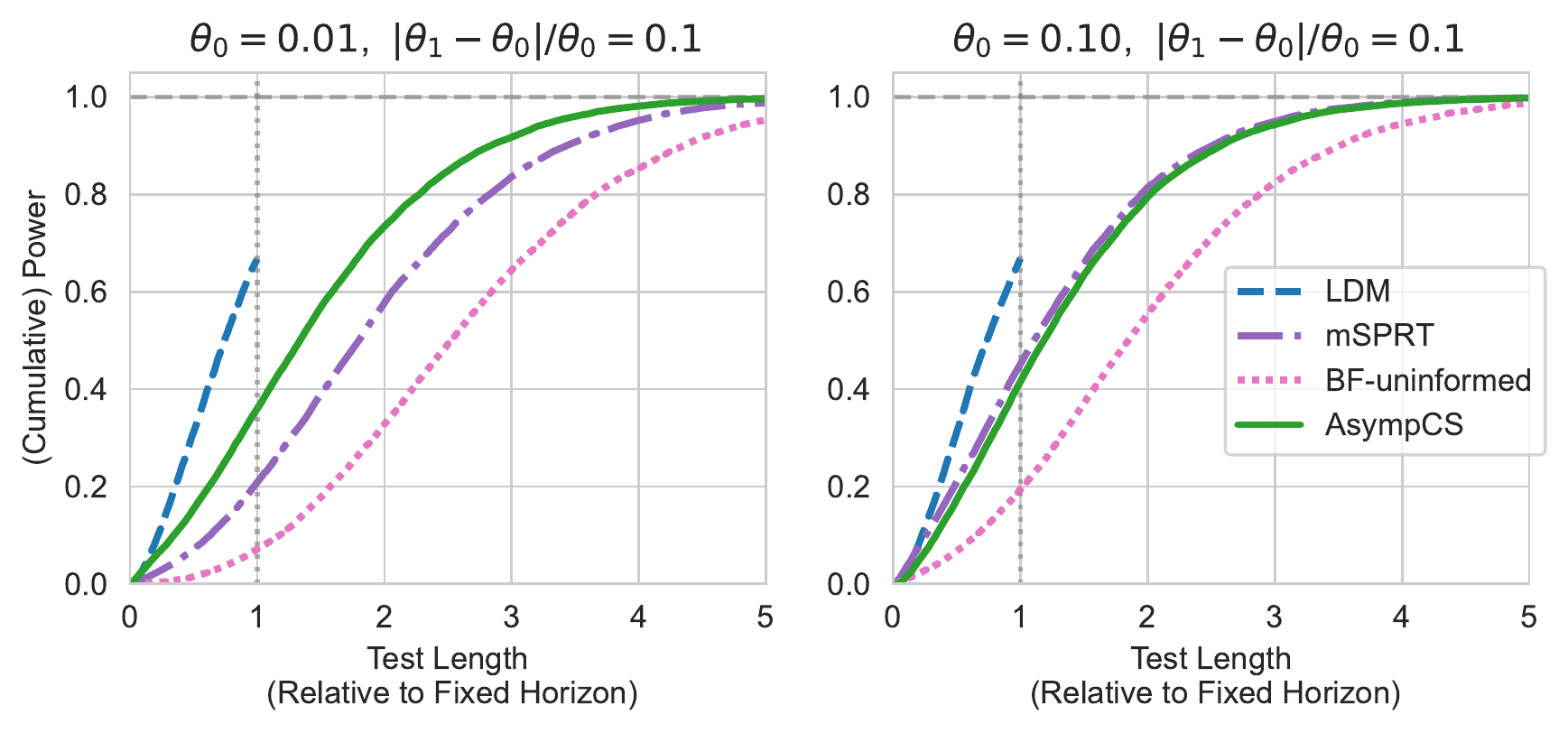}
  \vspace{-2mm}
  \caption{Statistical power for a binary response, as a function of the test length measured in multiples of the Fixed Horizon sample size at 80\% power. See Section \ref{sec:t2error} for discussion.}
    \vspace*{-5mm}
  \label{fig:power}
\end{figure}

For the methods that maintain the desired Type-I error (LDM, mSPRT, BF-uninformed, and AsympCS), we next consider their statistical power. In Figure \ref{fig:power}, we plot the statistical power of these methods against sample size (as a multiple of the FHT sample size with 80\% power). We see that LDM has the highest statistical power. Meanwhile Bayes Factor (BF-uninformed) has the lowest power, though this may improve with carefully chosen priors. 
The value of using the proposed AsympCS for the lift can be seen in Figure \ref{fig:lift_power}. While it has has lower power than the AsympCS for the ATE, we still see that the power approaches $1$ as the true lift increases. 

We also note that AsympCS and mSPRT can be tuned to have essentially the same power. When compared to the FHT sample size, we see that to achieve the same power with AsympCS, we will need two to three times the sample size. For LDM, the ``peek'' times must be recorded, and must be independent of data (challenging in a platform where many analysts may monitor data continuously) 
and moreover, we are unable to arbitrarily continue the test.

Given the rigorous error guarantees and flexible (non-parametric) assumptions of AsympCSs, combined with a comprehensive set of comparisons of both frequentist and Bayesian properties of the methods (see Appendix~\appendixref{\ref{sec:additional-experimental-results}}), we believe AsympCS is the right choice for our experimentation platform.

\subsection{Sample Size Calculator}
In Figure \ref{fig:sample_size_ratio} we show a comparison of the estimated 80th percentile stopping time (from Section \ref{sec:samplesize}) vs. an empirical stopping time (computed from 10,000 simulations). We see how the mathematically computed sample sizes closely match the empirical sample sizes, though they are conservative. We use this methodology to aid practitioners in sample size estimation in our testing platform. 

\begin{figure}[t]
  \centering
  \includegraphics[width = 0.9\linewidth]{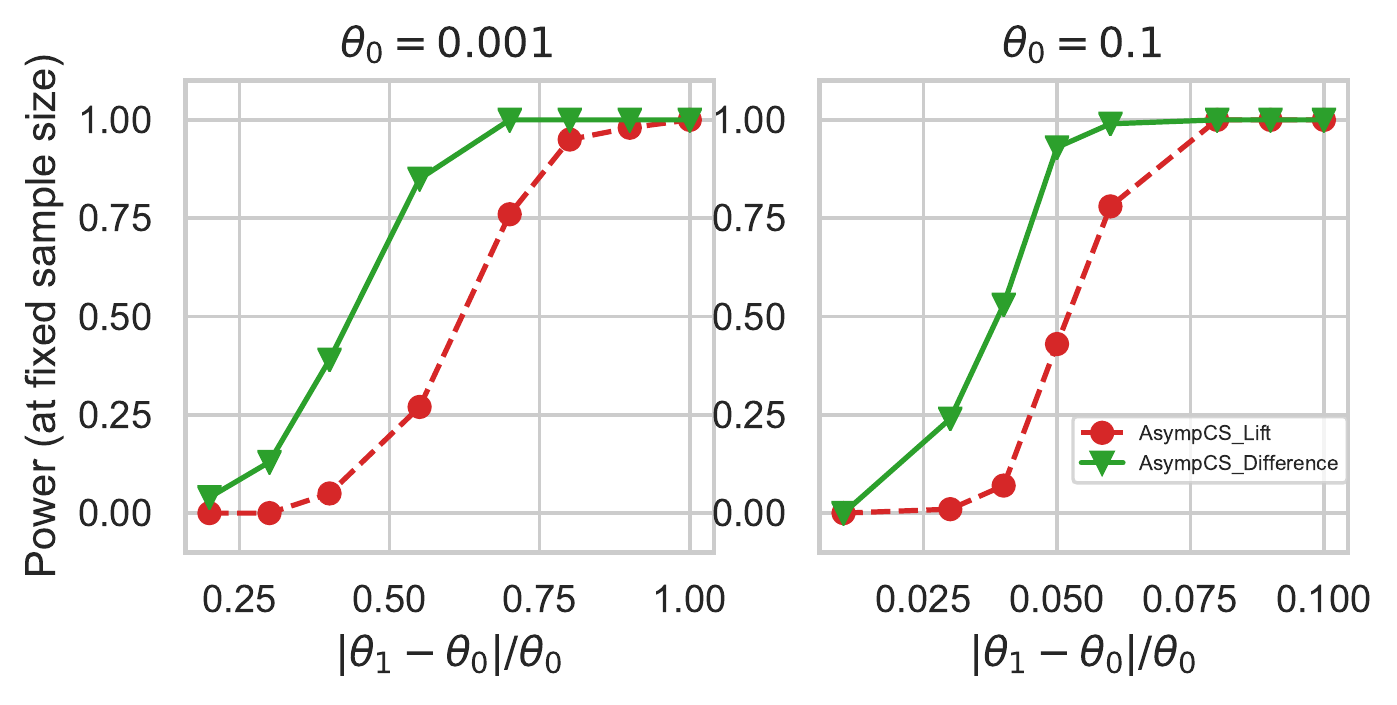}
  \vspace{-5mm}
  \caption{Power of AsympCS for the lift test statistic, which is always lower than Power for the ATE (difference in means).}
  \label{fig:lift_power}
\end{figure}
\begin{figure}
  \centering
  \includegraphics[width = 0.7\linewidth]{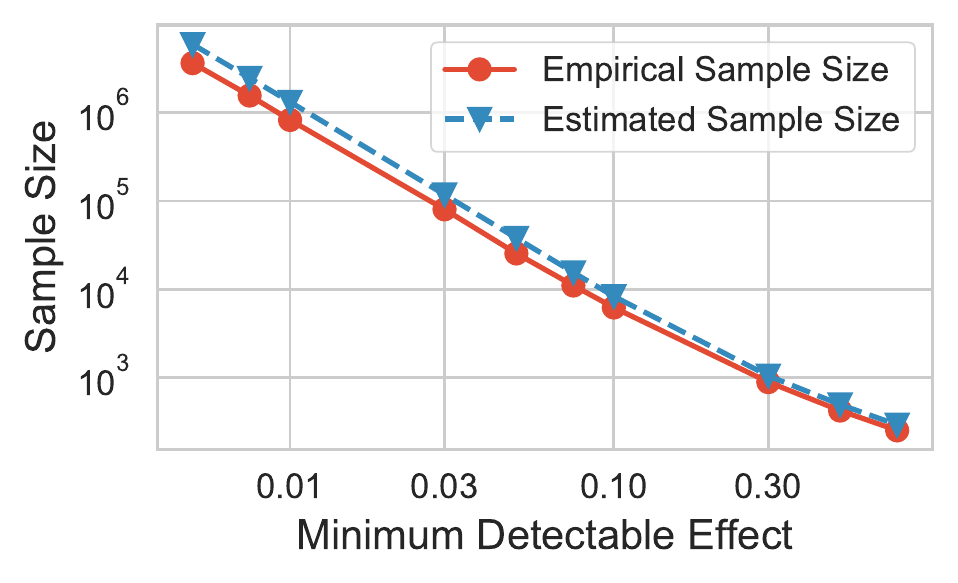}
  \vspace{-3mm}
  \caption{Comparison of estimated vs. simulated (empirical) sample size (80\% power), as a function of the MDE.}
  \label{fig:sample_size_ratio}
    \vspace*{-3mm}
\end{figure}

\subsection{Real-World Experiments}
We consider 2,089 real A/B tests conducted on our experimentation service. Similar to what we have observed in simulations, using FHT with ``peeking'' leads to erroneous conclusions in real experiments. As seen in Table~\ref{tab:fixed_asympcs}, $57\%$ of experiments resulted in non-significant tests using both FHT and AsympCS, and $28\%$ of experiments resulted in significant tests using both FHT and AsympCS (and hence there being agreement on the conclusion for $85\%$ of the experiments). Notice that $15\%$ of the time, a FHT with continuous monitoring resulted in statistical significance while the AsympCS did not. 

\begin{table}[t]
\small
\caption{Comparison of AsympCS and FHT on real A/B Tests}
\label{tab:fixed_asympcs}
\resizebox{\columnwidth}{!}{%
\begin{tabular}{l|cc|c}
& AsympCS Significant & AsympCS Not Significant & Total \\
\hline
FHT Significant     & 28\% (593)          & 15\% (308)              & 901   \\
FHT Not Significant & 0.1\% (3)           & 57\% (1185)             & 1188  \\
\hline
Total               & 596                 & 1493                    & 2089 
\end{tabular}%
}
\end{table}

\begin{figure}[t]
  \centering
  \includegraphics[width = 0.85\linewidth]{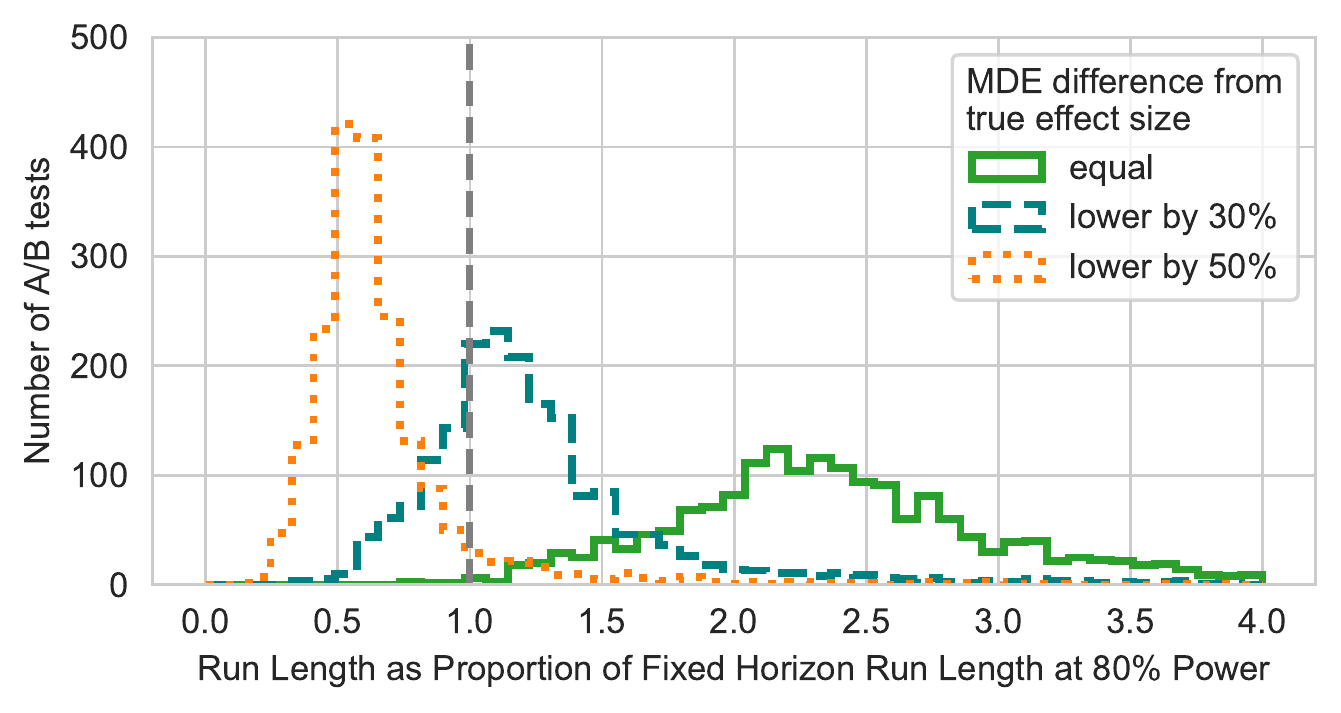}
  \vspace{-3mm}
  \caption{Sample sizes of AsympCS methods that would be observed, given real A/B test effect sizes.}
  \label{fig:run_lengths}
      \vspace*{-5mm}
\end{figure}

A commonly cited downside of anytime-valid methods is that they take longer to reach a conclusion compared to FHTs with optimally tuned sample sizes --- for example, if the Minimum Detectable Effect (MDE) is known \emph{a priori}, then a FHT with a sample size calculation will stop earlier than an AsympCS-based test. Note however that if the ATE and variance are not known (and they are almost always unknown), then it is not reasonable to expect to know the MDE. In such cases, an AsympCS can stop substantially earlier and with higher power. For example, if the MDE is under-estimated, then a FHT could be under-powered, resulting in false negatives and requiring experiments to be restarted (wasting the collected data), while if the MDE is over-estimated, a FHT will waste time and resources (e.g. requiring 10k samples when 1k would have sufficed). AsympCSs sidestep the need for MDE calculation altogether, allowing A/B tests to run only as long as they need to in order to achieve significance --- no more, no less. This phenomenon is illustrated in Figure \ref{fig:run_lengths} for a real world distribution of ATEs. We took the final observed ATE of 2,089 real world A/B tests, and simulated 1000 tests (with this ATE) for each real one, and compared the stopping times at 80\% power of each method. We see that if the MDE is guessed correctly, then an AsympCS does in fact take longer to stop, But if the true MDE is 50\% smaller than anticipated, an AsympCS stops roughly twice as fast as an FHT. 
\vspace*{-2mm}

\subsection{Implementation on Adobe's Test Platform}
The AsympCS-based analysis of online A/B tests is now available in the ``Adobe Journey Optimizer'' \citep{Getstart17:online, Statisti60:online}. In addition, the same technology is available in Customer Journey Analytics \citep{cja}, a flexible data analysis ecosystem where customers can connect to their own experimental data and analyze it using our approach. Both products employ the same underlying time-series database that allows for ad-hoc creation of metrics, dimensions and segments, and can query web-scale data in seconds. The fact that AsympCSs only require sample means and variances of metrics makes implementation straightforward and computation fast.  

\section{Conclusion}
We have proposed and implemented an anytime-valid strategy for online A/B tests based on asymptotic confidence sequences, permitting continuous monitoring of experimental results. Our approach enjoys rigorous anytime guarantees, can be easily implemented on web-scale analytics engines, can estimate alternate metrics like lift, and provides a sample size calculator for the design of A/B tests. We illustrated the advantages of our approach using simulations, and on several real A/B tests conducted on our platform. In the future, we plan to explore anytime-valid approaches to handling multi-armed bandits and time-to-event data. 

\bibliographystyle{ACM-Reference-Format}
\bibliography{bib}

\onecolumn
\appendix

\section{Appendix}

\subsection{Average Treatment Effect With Empirical Propensities}
\label{sec:appendix_ate}
In this section, we show how the inverse propensity weighted estimator can be refined, by replacing propensities with ``empirical'' propensities. 

For each user $i$ in the experiment with covariates $X_i$, and who has assignment $A_{i} = {0, 1}$ to arms $0$ or $1$ respectively in the experiment, with propensity $\pi(X_i)$, \citet{waudby2021doubly} considered the inverse propensity weighted (IPW) estimator, defined as:
\begin{align}
\hat{\psi}_n &= \frac{1}{n} \sum^{n}_{i=1} \hat{f}_{n^{\prime}}(Z_i) \nonumber
=  \frac{1}{n}\sum^{n}_{i=1} \left(\frac{A_i}{\pi_{n^{\prime}}(X_i)} - \frac{1 - A_i}{1 - \pi_{n^{\prime}}(X_i)}\right)Y_i,
\end{align}
where $n^{\prime}$ denotes that the estimator is based on the first $n^{\prime} < n$ observations (recall, the scenario there was sample splitting being used to estimate a regression adjusted inverse propensity weighted estimator. For our purposes, we omit regression adjustments, and will set $n^{\prime} = n$). In~\cite{waudby2021doubly}, they showed that, the inverse propensity weighted estimator has a confidence sequence of the form:
\begin{align}
\bar{C}_{n} &= \hat{\psi}_{n} \pm \sqrt{\widehat{\text{var}}_{n}(\hat{f}_{n})}\cdot \sqrt{\frac{2(n\rho^2 + 1)}{n^2\rho^2}\log\left(\frac{\sqrt{n\rho^2 + 1}}{\alpha}\right)}.
\end{align}
For our purposes, it would be useful to derive an explicit expression for the variance of the inverse propensity weighted estimator, in terms of the more commonly used individual sample variances.

\subsubsection{Refinement of IPW estimator for fixed probabilities}

Suppose after $n$ observations, we have $n_1$ observations in arm 1, each with value $Y^{1}_{i_1}$, and $n_0$ observations in arm 0, with values $Y^{0}_{i_0}$. In this section,
we will assume fixed propensities which are equal to the empirical probability of being assigned to a given treatment arm, \textit{i.e.}, we have $\pi_{n}(X_i) = \pi = n_1/ (n_0 + n_1) = n_1 / n$. Note that this means we are setting $T^{\prime} = T$, \textit{i.e.}, the estimand is based on the observed empirical probability of treatment assignment at time $T$. With this, we get:
\begin{align}
\hat{\psi}_n &= \frac{1}{n}\sum^{n}_{i=1}\hat{f}(Z_i)\nonumber =\frac{1}{n}\sum^{n}_{i=1}\left(\frac{n}{n_1}A_i  - \frac{n}{n_0}(1-A_i)\right)Y_{i} \nonumber= \frac{1}{n_1}\sum^{n_1}_{i_{1} = 1} Y^{1}_{i_1} - \frac{1}{n_0} \sum^{n_0}_{i_0 = 1} Y^{0}_{i_0} \nonumber = \hat{\mu}_{1;n} - \hat{\mu}_{0;n}.
\end{align}

So we have shown that the IPW estimator reduces to a regular difference in means if we choose propensities to be equal to their empirical weights. 

\subsubsection{Variance calculation for IPW Estimator}

With this refinement of the IPW Estimator in scenarios where we use empirical probabilities as propensities, we now obtain an expression for the variance of the IPW Estimator in terms of individual sample sizes, means and variances.  First, we need to calculate $\widehat{\text{var}}_{n}(\hat{f})$, which as for any regular unbiased variance estimate, has the following form:
\begin{align}
\widehat{\text{var}}_{n}(\hat{f}_{n}) &= \frac{1}{n - 1}\sum^{n}_{i=1}\left(\hat{f}(Z_i) - \overline{\hat{f}(Z_i)}\right)^2
= \frac{n}{n-1}\left[ \left(\frac{1}{n}\sum^{n}_{i=1}\hat{f}^2(Z_i)\right) - \left(\hat{\mu}^{1}_{n_1} - \hat{\mu}^{0}_{n_0}\right)^2\right]
\end{align}

As $A_i = 0 \text{ or } 1$, we can simplify the sum of squares term as follows:



\begin{align}
\frac{1}{n}\sum^{n}_{i=1}\hat{f}^2(Z_i) &= \frac{1}{n}\sum^{n}_{i=1}\left(\frac{n}{n_1}A_i  - \frac{n}{n_0}(1-A_i)\right)^2Y^2_{i} = \frac{n}{n^2_1} \sum^{n_1}_{i_1=1}Y^{2}_{i_1}  + \frac{n}{n^2_0} \sum^{n_0}_{i_0=1}Y^{2}_{i_0}\\
\end{align}

Then, defining the (biased) sample variance $\hat{\sigma}^2_0$ and $\hat{\sigma}^2_1$  for each arm:

\begin{align}
\hat{\sigma}^2_{1;n} = \left(\frac{1}{n_1}\sum^{n_1}_{i_1=1}Y^{2}_{i_1}\right) - \left(\hat{\mu}_{1;n}\right)^2
\end{align}
and similarly,
\begin{align}
\hat{\sigma}^2_{0;n} &= \left(\frac{1}{n_0}\sum^{n_0}_{i_0=1}Y^{2}_{i_0}\right) - \left(\hat{\mu}_{0;n}\right)^2.
\end{align}
Simplifying our notation, and writing $\hat{\mu}^{j}_{n_j} = \hat{\mu}_j$,
we obtain the variance of the inverse propensity weighted estimator, expressed in terms of individual sample means and variances:
\begin{align}
\widehat{\text{var}}_{n}(\hat{f}_{n}) &=  \frac{n}{n-1}\left[ \frac{n}{n_0}\left(\hat{\sigma}^2_{0;n} + \hat{\mu}^2_{0;n}\right) + \frac{n}{n_1}\left(\hat{\sigma}^2_{1;n} + \hat{\mu}^2_{1;n}\right) - \left(\hat{\mu}_{1;n} - \hat{\mu}_{0;n}\right)^2\right]
\end{align}

 We therefore have the following final expression for the confidence sequence for the average treatment effect after $n$ users, with $n_0$ and $n_1$ in arms 0 and 1 of the experiment:
\begin{align}
\bar{C}_{n}^{\text{AsympCS}} := \left(\hat{\mu}_{1;n} - \hat{\mu}_{0;n}\right) \pm \sqrt{\frac{n}{n-1}\left[ \frac{n}{n_0}\left(\hat{\sigma}^2_{0;n} + \hat{\mu}^2_{0;n}\right) + \frac{n}{n_1}\left(\hat{\sigma}^2_{1;n} + \hat{\mu}^2_{1;n}\right) - \left(\hat{\mu}_{1;n} - \hat{\mu}_{0;n}\right)^2\right]}\cdot \sqrt{\frac{2(n\rho^2 + 1)}{n^2\rho^2}\log\left(\frac{\sqrt{n\rho^2 + 1}}{\alpha}\right)}
\end{align}


\subsection{Sample Size Calculations}
\label{sec:appendix_samplesize}

In this section, we briefly review the notion of power in experiments and discuss its analogue in the AsympCS framework, with a focus on sample size estimation. In particular, we will derive an estimate for the expected stopping size of an anytime valid test which is an analogue of the required sample size for a fixed horizon test. Sample size calculation is a critical tool in the design, deployment, and analysis of experiments and allows practitioners to reason over the sizing of their experiments as well as the feasibility of measuring their hypothesis at all given their population.

\subsubsection{Power}

Before we proceed, we will first give a broad overview of power in the two sample setting. In a fixed horizon setting, where valid confidence intervals are only available after a pre-determined amount of time, power calculations are a critical part of the experimentation process. So what is the statistical power? Power, usually denoted as $(1 - \beta)$, gives the expected proportion of times we would reject the null hypothesis when the alternative hypothesis is true. An intuitive way to understand $\beta$ is as a parameter that controls the type-II error, whereas the $\alpha$ we use in confidence intervals (and sequences) controls the type-I error.

\subsubsection{From Power to Expected Stopping Time}

Before we dive into the specifics of the power calculation, it is important to note that the power of an asymptotic confidence sequence is always $1$. This is due to the fact that, in contrast to confidence intervals resulting from fixed horizon experiments, AsympCS are defined with respect to an infinite horizon. In practical terms, this means that we are free to continually increase the number of subjects in an experiment until the null has been rejected, or we are sufficiently satisfied that we have failed to reject the null hypothesis.

Of course, in practice there aren't an infinite number of customers for us to add into an experiment, which means we still need some way of reasoning over the number of subjects we should expect to see before being able to reject the null under the alternative hypothesis. We will refer to this as the expected stopping time of an experiment.

More formally we will define the expected stopping time of an AsympCS as follows: Let $\hat{\mu}_n$ be the empirical mean at time $n$, and $\mathfrak{B}^{\alpha, \sigma}_n$ be the uncertainty interval, so that the $(1-\alpha)$ AsympCS for the true mean $\mu$ has the form 
\begin{align} 
\bar{C}_n \equiv\left(\widehat{\mu}_n \pm \mathfrak{B}^{\alpha,\sigma}_n\right):=\left(\widehat{\mu}_n \pm \widehat{\sigma}_n \sqrt{\frac{2\left(n \rho^2+1\right)}{n^2 \rho^2} \log \left(\frac{\sqrt{n \rho^2+1}}{\alpha}\right)}\right),
\end{align}
where $\alpha$ is the level of the test and $\hat{\sigma}_n$ is the sample standard deviation. In the case of A/B tests, we will swap symbols for the treatment effect and use $\hat{\theta}_n$ for the sample treatment effect, where the value under the null is $\theta_{H_0}$, while the value under the alternative hypothesis is $\theta_{H_1}$. 

We will assume $\theta_{H_0} = 0$ in the sequel without loss of generality. And we let $\theta_{H_1}$ be the treatment effect under the alternative hypothesis. This should be set to the desired minimal detectable effect. Finally, let $N$ be the total number of subjects that will be considered in the experiment (total population size).
For a given type-I and type-II error rate of $\alpha$, and $\beta$, respectively, the expected stopping time is defined in terms of the AsympCS under the null hypothesis, $\bar{C}^{H_0}_n$ as 
$$ \inf_n \left( p(\theta_{H_0} \not\in \bar{C}^{H_0}_{n} | H_1) >= 1 - \beta\right), $$ 
\textit{i.e}, we wish to find the earliest stopping time such that the probability that we reject the null hypothesis is at least $(1 - \beta)$. Note that the validity of this optimization problem stems directly from the definition of asymptotic confidence sequences. After noting that (1) $\sigma$ is the same under either hypothesis, and (2) the earliest possible time period is entailed by the time at which the $\beta$ quantile under $H_1$ is greater than or equal to the $(1 - \alpha)$ quantile under $H_0$, we can rewrite the expected stopping time problem as 
\begin{align}
\argmin_{n \in 1,\dots,N} \left(|\bar{C}^{H_0}_n| - (\theta_{H_1} - \mathfrak{B}^{\beta, \sigma}_t) >= 0\right),
\end{align}
in plain terms the optimization problem reduces to finding the time point at which the quantile entailed by our power constraint, $\beta$ exceeds the decision boundary given by our type-I error level, $\alpha$. Because both $\bar{C}^{H_0}_n$, and $(\theta_{H_1} - \mathfrak{B}^{\beta, \sigma}_n)$ are monotonic with respect to $n$, this optimization problem is, as a result, convex. While we are unaware of a closed form solution to this optimization problem, it can be efficiently solved with standard root finding procedures.


\subsection{AsympCS for Lift}\label{sec:lift}
Recall the two-sample hypothesis test that is common to most A/B tests. 
\begin{equation*}
H_0: \mu_0 = \mu_1 \quad vs. \quad H_1: \mu_0 \neq \mu_1, 
\end{equation*}
alternately, we can write it as 
\begin{equation}
\label{eqn:1}
H_0: \frac{\mu_1}{\mu_0} = 1 \quad vs. \quad H_1: \frac{\mu_1}{\mu_0} \neq 1. 
\end{equation}
In the Experimentation Service we use the Anytime Valid Confidence Sequence for the ATE as defined in \ref{eqn:asympcs} to declare a winner for this hypothesis test. From extensive simulations we see that this test has correct type I error and is anytime valid. 

Sometimes, marketers are interested in an alternate quantity, called the \emph{relative treatment effect}, more commonly referred to as the \emph{lift}, formally defined as $\theta = \mu_1/\mu_0 - 1$.
We can then present the test statistic $\hat{\theta} = \hat{\mu}_1/\hat{\mu}_0 - 1$, along with a $(1-\alpha)$ confidence bound for $\theta$ to a user who is interested in Hypothesis (\ref{eqn:1}). We would like for this bound to be an Anytime Valid Confidence Sequence. 
\\

\noindent \textbf{Parameter change}. We define the parameter $\gamma = \log (\theta + 1) = \log \mu_1 - \log \mu_0$, \textit{i.e.}, $\theta = e^\gamma -1.$
Before we construct an Anytime Valid Confidence Sequence for $\theta$, let's look at some preliminary results we will use.  

\begin{lemma}[Monotonic Transformations]
\label{lem:1}
Let $X$ be a continuous random variable, such that: $\Pr(X < \ell) = \alpha/2$. Then, for a strictly monotonic increasing function $g: \mathbb{R} \rightarrow \mathbb{R}$, $\Pr(g(X) < g(\ell)) = \alpha/2$.
\end{lemma}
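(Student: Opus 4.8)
The plan is to reduce the statement to an equality of events rather than a direct computation of probabilities. The central observation is that for a strictly increasing $g$, the condition $g(X) < g(\ell)$ is logically equivalent to $X < \ell$, so the two events coincide as subsets of the sample space and must therefore carry identical probability. Concretely, I would show that $\{g(X) < g(\ell)\} = \{X < \ell\}$ as events, and then simply read off $\Pr(g(X) < g(\ell)) = \Pr(X < \ell) = \alpha/2$.

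To establish the event equality I would verify the defining equivalence $g(a) < g(b) \iff a < b$ for all reals $a, b$. The forward direction is immediate from the definition of strict monotonicity: $a < b$ implies $g(a) < g(b)$. For the reverse direction I would use that a strictly increasing function is injective, so if $g(a) < g(b)$ then $a \neq b$ (otherwise $g(a) = g(b)$), and we cannot have $a > b$ either (that would force $g(a) > g(b)$); hence $a < b$. Applying this equivalence pointwise with $a = X(\omega)$ and $b = \ell$ yields the desired equality of events, and taking probabilities finishes the argument.

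The only real subtlety to watch is the reverse implication, since that is precisely where \emph{strict} (as opposed to weak) monotonicity is essential: for a merely non-decreasing $g$ the two events could differ on a region where $g$ is constant, and the conclusion would fail. I would also remark that the continuity hypothesis on $X$ is not actually needed for this particular equivalence, which is purely set-theoretic and holds for any random variable; continuity is presumably stated to align with the quantile and confidence-bound setting in which the lemma is subsequently applied, where one needs to pass freely between strict and non-strict inequalities without worrying about atoms.
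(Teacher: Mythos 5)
Your proof is correct, and in substance it is the same one-line argument as the paper's: both reduce the claim to the equality of events $\{g(X) < g(\ell)\} = \{X < \ell\}$ and then take probabilities. The difference lies in how the reverse implication is justified. The paper invokes the inverse function: it notes that a strictly monotonic $g$ has an inverse $g^{-1}$ and writes $\Pr(g(X) < g(\ell)) = \Pr(X < g^{-1}(g(\ell))) = \Pr(X < \ell)$, which implicitly uses that $g^{-1}$ is itself strictly increasing. Your trichotomy argument (if $g(a) < g(b)$, then $a = b$ and $a > b$ are both impossible) proves the same biconditional without ever mentioning $g^{-1}$; this is slightly more elementary and also sidesteps a small imprecision in the paper, which asserts that the inverse exists ``over the range of the function $\mathbb{R}$'' --- a strictly increasing $g: \mathbb{R} \rightarrow \mathbb{R}$ need not be surjective, so $g^{-1}$ is defined only on the (possibly proper) range of $g$. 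That imprecision is harmless in the paper's proof, since $g(X)$ and $g(\ell)$ always lie in the range, but your version requires no such caveat. Your closing remarks are also accurate: strictness is precisely what makes the reverse implication work, and the continuity of $X$ plays no role in the lemma itself --- it matters only in the downstream application, where $\Pr(X < \ell) = \alpha/2$ is treated as defining a quantile and one passes between strict and non-strict inequalities.
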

\begin{proof}
For the proof of this, note that a strictly monotonic function always has an inverse, $g^{-1}(\cdot)$ over the range of the function $\mathbb{R}$. Therefore, we have:
\begin{eqnarray}
\Pr(g(X) < g(\ell)) = \Pr(X < g^{-1}(g(\ell))) = \Pr(X < x) = \alpha/2.
\end{eqnarray}
\end{proof}
Our goal is to construct an AsympCS for $\gamma$, and then apply Lemma~\ref{lem:1} to get an AsympCS for $\theta$. 
We use the one-sided version of the AsympCS (Proposition C.1 of \cite{waudby2021doubly}), which states that for a sequence of random variables $X_{01}, X_{02}, X_{03}, \cdots$ with $\E(X_{0i}) = \mu_0$ and $\E(|X_{0i}|^{2+\delta}) < \infty$ (for some positive $\delta$), the running mean $\hat{\mu}_{0;n} = \frac{1}{n} \cdot \sum_{i=1}^{n_0} X_{0i}$ has the property:

\begin{equation}
\sup_n \Pr\left(\hat{\mu}_{0;n} - \hat{\sigma}_{0;n} \sqrt{\frac{2(n\rho^2 + 1)}{n^2 \rho^2} \log \left(\frac{\sqrt{(n \rho^2 + 1)}}{\alpha}\right)} > \mu_0 \right) \le \frac{\alpha}{2}, \mbox{ and }
\end{equation}

\begin{equation}
\sup_n \Pr\left(\hat{\mu}_{0;n} + \hat{\sigma}_{0;n} \sqrt{\frac{2(n\rho^2 + 1)}{n^2 \rho^2} \log \left(\frac{\sqrt{(n \rho^2 + 1)}}{\alpha}\right)} < \mu_0\right) \le \frac{\alpha}{2},
\end{equation}
where $\hat{\sigma}_{0_n}$ is the sample standard deviation.
For ease of notation, define the two terms:
\begin{eqnarray*}
\hat{l}_{0;n} &=& \hat{\mu}_{0;n} - \hat{\sigma}_{0;n} \sqrt{\frac{2(n\rho^2 + 1)}{n^2 \rho^2} \log \left(\frac{\sqrt{(n \rho^2 + 1)}}{\alpha}\right)}, \\
\hat{u}_{0;n} &=& \hat{\mu}_{0;n} + \hat{\sigma}_{0;n} \sqrt{\frac{2(n\rho^2 + 1)}{n^2 \rho^2} \log \left(\frac{\sqrt{(n \rho^2 + 1)}}{\alpha}\right)}.
\end{eqnarray*}
Applying Lemma~\ref{lem:1}, we get the following:
\begin{eqnarray*}
\sup_n \Pr(\log \hat{l}_{0;n} > \log \mu_0) &\le& \frac{\alpha}{2}\\
\sup_n \Pr(\log \hat{u}_{0;n} < \log \mu_0) &\le& \frac{\alpha}{2}.
\end{eqnarray*}

Applying the union bound, the set $(\log \hat{l}_{0;n}, \log \hat{u}_{0;n})$ is a $(1-\alpha)$ Anytime Valid Confidence Sequence for the treatment mean $\mu_0$. Define the AsympCS for $\mu_1$ similarly as $(\log \hat{l}_{1;n}, \log \hat{u}_{1;n})$. Next, we are interested in the difference $(\log \mu_1 - \log \mu_0) = \gamma$. To construct a confidence bound for $\gamma$, we take the maximum possible difference between the two test statistics within the ranges of the two AsympCSs above as the upper bound for $\gamma$. As the lower bound for $\gamma$ we take the smallest difference possible from the two. This is a standard, albeit conservative way of constructing a confidence bound for a different from the confidence bounds for the component pieces. Thus, the following is a $(1-\alpha)$ AsympCS for $\gamma$:
\begin{eqnarray}
\label{eqn:5}
(\log \hat{l}_{1;n} - \log \hat{u}_{0;n}, \log \hat{u}_{1;n} - \log \hat{l}_{0;n}).
\end{eqnarray}
Applying Lemma \ref{lem:1} again, we get a $(1-\alpha)$ AsympCS for $\theta$:
\begin{eqnarray}
(e^{\log \hat{l}_{1;n} - \log \hat{u}_{0;n}}-1, e^{\log \hat{u}_{1;n} - \log \hat{l}_{0;n}} -1).
\end{eqnarray}
Interestingly, this boils down to 
\begin{equation}
\left(\frac{\hat{l}_{1;n}}{\hat{u}_{0;n}} - 1, \frac{\hat{u}_{1;n}}{\hat{l}_{0;n}} - 1\right).
\end{equation}


\subsection{Baseline Approaches}
\label{sec:baselines}

In this section, we describe the baseline methods in more detail. 
\subsubsection{Fixed Horizon $z$-Tests}
The hypothesis test of interest is the ``difference of means'' between the two treatment arms. The hypothesis being tested for an A/B test is (conducted at the level $\alpha$):
\begin{equation}
H_0:\mu_0 = \mu_1     \quad\text{vs.}\quad   H_1: \mu_0 \neq \mu_1.
\label{eqn:hyp}
\end{equation}
The first method we consider  is a classical z-test (in large sample limit, this is equivalent to a t-test). Let $\hat{\theta}_n = \hat{\mu}_{1;n} - \hat{\mu}_{0;n}$ be the sample difference in means, and $\hat{\sigma}_n$ be the sample (pooled) standard deviation after $n$ samples have been recorded. For this test, the (point-wise) confidence interval is given by $CI_{1-\alpha} \coloneqq \hat{\theta}_n \pm z_{1-\alpha/2} \frac{\hat{\sigma}_n}{\sqrt{n}}$. Note that the statement that the null hypothesis is rejected (\textit{i.e.}, an effect size is non-zero) at significance level $\alpha$ is equivalent to a statement that the $(1-\alpha)$ CI does not include the null. Formally, the type-I error guarantees of this confidence interval are only satisfied if the result is looked at once. 

In what follows, we will also explore an alternate stopping rule, the test will be stopped as soon as the above $(1-\alpha)$ CI does not include the null hypothesis. Of course, this is formally an incorrect way to run a test. But it is emblematic of the peeking problem we are attempting to fix. This is our first baseline, \textbf{FHT-peeking} (fixed horizon test with peeking).

\subsubsection{Group Sequential methods, \textbf{LDM}}
As our second baseline, we consider the group sequential method define by \cite{gordon1983discrete, demets1994interim}, \textbf{LDM} (Lan and De Mets). This requires pre-specification of the total sample size, but allows flexible peeking (with the requirement that the times of the peeks are pre-registered and accounted for). We use the Pocock alpha-spending function \cite{pocock1977group}, which spends $\alpha$ evenly over the length of the experiment. The cumulative $\alpha$ spent by the peek at time $t^*$ is given by $\alpha \log(1+(e+1)t^*)$, with the incremental $\alpha$ spent at the $k^{\text{th}}$ peek given by $\alpha(t_k^*)-\alpha(t_{k-1}^*)$.  During numerical experiments, we use two sides boundaries with 100 equally spaced peeks, up until the fixed horizon sample size. 

\subsubsection{\textbf{mSPRT} method}

Our next baseline is the Mixture Sequential Probability Ratio Test proposed in \cite{johari2017peeking}. For binary data, they define a quantity
\begin{align}
\tilde{\Lambda}^{H, \theta_0}_{n} = \sqrt{\frac{\hat{\sigma}^2_n}{n \rho^2 + \hat{\sigma}^2_n }}\exp{\left\{\frac{n^2\rho^2(\hat{\theta}_n - \theta_0)^2}{2\hat{\sigma}^2_n(n \rho^2 + \hat{\sigma}^2_n )}\right\}}\nonumber
\end{align}
where $\theta_0$ is the assumed difference of means under the null hypothesis. The always valid $p$-values are defined by a recursion 
\begin{align}
p_0 = 1; \quad p_n = \min{\{p_{n-1}, 1/\Lambda^{H, \theta_0}_n\}}\nonumber
\end{align}
and the test is stopped as soon as $p \le \alpha$.

A corresponding $(1-\alpha)$-Confidence Sequence (termed an ``anytime valid confidence interval'' by \cite{johari2017peeking}), can be derived by inverting the mSPRT and is given by
\begin{equation}
\bar{C}_{n}^{\text{mSPRT}} \coloneqq \Biggl\{\hat{\theta}_n \pm \hat{\sigma}_n\sqrt{\frac{2(n\rho^2+\hat{\sigma}^2_n)}{n^2\rho^2} \log\left(\sqrt{\frac{n\rho^2+\hat{\sigma}^2_n}{\alpha \hat{\sigma}^{2}_n}}\right)}\Biggr\}.
\end{equation}
Recall, $\rho^2$ is a free parameter that can be tuned, and $\hat{\sigma}_n$ is the \textit{pooled sample variance}. The test is stopped as soon as the above $CS$ does not include the null hypothesis.

The stopping rule involves running the test and evaluating each new data point stopping whenever a this confidence sequence does not include the null, which is equivalent to saying $\tilde{\Lambda}^{H, \theta_0}_{n} >= 1/\alpha$. 

\subsubsection{Asymptotic Confidence Sequences - \textbf{AsympCS}}
The next method is of course Asymptotic Confidence Sequences, \textbf{AsympCS}. For this, we use the results of Appendix ~\ref{sec:appendix_ate}, to write the Conidence Sequence as 

\begin{align}
\bar{C}_{n}^{\text{AsympCS}} := \left\{ \hat{\theta}_n \pm  \pm \sqrt{\widehat{\text{var}}_{n}(\hat{f}_{n})} \cdot \sqrt{\frac{2(n\rho^2 + 1)}{n^2\rho^2}\log\left(\frac{\sqrt{n\rho^2 + 1}}{\alpha}\right)}\right\}
\end{align}
where the empirical inverse propensity weighted estimator is 
\begin{align}
\widehat{\text{var}}_{n}(\hat{f}_{n}) = \frac{n}{n-1}\left[ \frac{n}{n_0}\left(\hat{\sigma}^2_0 + \hat{\mu}^2_0\right) + \frac{n}{n_1}\left(\hat{\sigma}^2_1 + \hat{\mu}^2_1\right) - \left(\hat{\mu}_1 - \hat{\mu}_0\right)^2\right]
\end{align}
We note the deep similarities between this expression, and the \textbf{mSPRT} method's confidence sequence. 

Again, the stopping rule involves running the test, and with each new data point, evaluating whether the Confidence sequence contains the null. If it does not, the test is stopped.

\subsubsection{Bayesian Hypothesis Testing - \textbf{BHT-uninformed}}

A decision theoretic way of running A/B tests is to instead define a loss function $L(\theta, \theta^*)$ for an unknown parameter $\theta$, which quantifies the cost of deciding that the parameter has a different value $\theta^*$ instead of its ``true'' value, $\theta$. While decision theory spans both frequentist and Bayesian approaches, here we will focus on the latter. Bayesian methods will generally give a posterior distribution $p(\theta|X)$ for the unknown parameter $\theta$ after having observed data $X$. Decision theory then has us compute the posterior expected loss $E\left[L(\theta^*|X)\right] = \int L(\theta, \theta^*) p(\theta|X) d\theta$, which is a function of $\theta^*$. The optimal decision is then the choice of $\theta^*$ that minimizes this posterior expected loss. For A/B testing, there are of course many possible loss functions one can consider, e.g., the company VWO, in a white paper on Bayesian methods \cite{stucchio2015bayesian}, suggested a loss function that penalizes wrong choices linearly:
\begin{align}
L(\theta_0, \theta_1; 0) &= max(\theta_1 - \theta_0, 0)\nonumber\\
L(\theta_0, \theta_1; 1) &= max(\theta_0 - \theta_1, 0)\nonumber.
\end{align}
This should be read as saying ``the loss of choosing arm 0 in the test, is 0 if $\theta_0 > \theta_1$, but is equal to the amount of uplift lost if $\theta_1 > \theta_0$. The Expected Loss is then 

\begin{align}
\E[L](?) = \int^{1}_{0}d\theta_0 \int^{1}_{0}d\theta_1 L(\theta_0, \theta_1; ?)P(\theta_0, \theta_1),
\end{align}
For binary outcomes, we implement \cite{stucchio2015bayesian}, with a uniform Beta prior for propensity of response, so that the posterior for each (independent) arm is also a Beta distribution. Having observed $n_i$ units in arm $i$, with $c_i$ conversions, this takes the form  
\begin{align}
\Pr(\theta_0, \theta_1) &= \Pr(\theta_0)\Pr(\theta_1) \nonumber\\
&= \text{Beta}(\theta_0; \alpha + c_0, \beta + n_0 - c_0)\cdot\text{Beta}(\theta_1; \alpha + c_1, \beta + n_1 - c_1),
\end{align}
where we set $\alpha = \beta = 1$ for an ``uninformed'' Beta distribution, as is generally used in an industry A/B testing platform. We evaluate this posterior expected loss at every peek, and stop as soon as it is below a ``threshold of caring,'' $\varepsilon$. For a baseline conversion rate of $\theta_0 = 0.1$, we set $\varepsilon = 0.0001$

\subsubsection{Bayes Factors- \textbf{BF-uninformed}}
Another popular paradigm for Bayesian ``testing'' is the use of Bayes Factors \cite{deng2016continuous}. In this framework, we assume a prior probability for $H_i$ to be true ($p(H_i)$). After collecting data, $X$, for the experiment, we compute the posterior odds of BF: $p(H_1|X)/p(H_0|X)$. The BF is the ratio of the likelihood of observing data under $H_1$ relative to $H_0$. If we assumed both the hypotheses to be equally plausible to start with, the Prior Odds is $1$, and the Posterior Odds are equal to the Bayes Factor. We then run the test, and stop it when the Posterior Odds are above (or below) some threshold, e.g., above 10, (or below 0.1) representing 10:1 odds that the alternative (or null) is consistent with the data. In our implementation, for binary outcomes, we have the null and alternative hypotheses:
\begin{align}
H_0:\,\,\mu_0 = p_A\quad\text{and } \mu_1 = p_A \qquad H_1:\,\,\mu_0 = p_A\quad\text{and }\mu_1 = p_B
\end{align}
We can then define priors as $\Pr(p_A, p_B | H_1) = \Pr(p_A|H_1) \times \Pr(p_B|H_1)$, with individual probabilities following the Beta distribution: $\Pr(p_A|H_1) = \text{Beta}(p_A; \alpha_A, \beta_A)$, and 
$\Pr(p_B|H_1) = \text{Beta}(p_B; \alpha_B, \beta_B)$. Meanwhile, for the null hypothesis, we have $\Pr(p_A| H_0) = \text{Beta}(p_A; \alpha_C, \beta_C)$. 

With these definitions, and having observed $c_0$ conversions from $n_0$ units in the first arm, and $c_1$ conversions from $n_1$ units in the second arm, we have a Bayes factor that is expressed purely in terms of beta functions, $\beta$:
\begin{align}
BF &=  \frac{\beta(\alpha_A + c_0, \beta_A + n_0 - c_0)\beta(\alpha_B + c_1, \beta_B + n_1 - c_1)}{\beta(\alpha_A, \beta_A) \beta(\alpha_B, \beta_B)} \div \frac{\beta(\alpha_C + c_0 + c_1, \beta_C + n_0 + n_1 - c_0 - c_1)}{\beta(\alpha_C, \beta_C)}\nonumber\\
\end{align}
Setting $\alpha_A = \alpha_B = \alpha_C = \alpha$, and $\beta_A = \beta_B = \beta_C = \beta$, we get
\begin{align}
BF = \frac{\beta(\alpha + c_0, \beta + n_0 - c_0)\beta(\alpha + c_1, \beta + n_1 - c_1)}{\beta(\alpha, \beta) \beta(\alpha + c_0 + c_1, \beta + n_0 + n_1 - c_0 - c_1)}
\end{align}
We use uniform/uninformative priors, so we set $\alpha = \beta = 1$ throughout our simulations. 

We stop when the Bayes Factors is above $1/\alpha$ (this $\alpha$ is the type-I error threshold, and not the prior $\alpha$), and for $\alpha = 0.05$, this represents 20:1 odds that the alternative (or null) is consistent with the data. This is our final baseline, \textbf{BF-uninformed}.


\section{Additional Experimental Results}\label{sec:additional-experimental-results}
\subsection{Effect of Tuning Parameter $\rho^2$}\label{sec:tuning-parameter-rho}
AsympCS has a tuning parameter $\rho^2$, which controls the rate at which the type-I error is spent over the horizon of the test. While the anytime guarantees hold for all values of $\rho^2$, we explore the performance of AsympCS based on different values of the tuning parameter. Figures \ref{fig:tuning_parameter} and \ref{fig:tuning_parameter_power} shows the different values of the tuning parameter and its effect on the asymptotic type-1 error and power. Based on these we have selected a value of $10^{-3}$ as a value that ensures good properties over a range of possible scenarios. 

\begin{figure}[t]
\centering
\includegraphics[width = 0.9\linewidth]{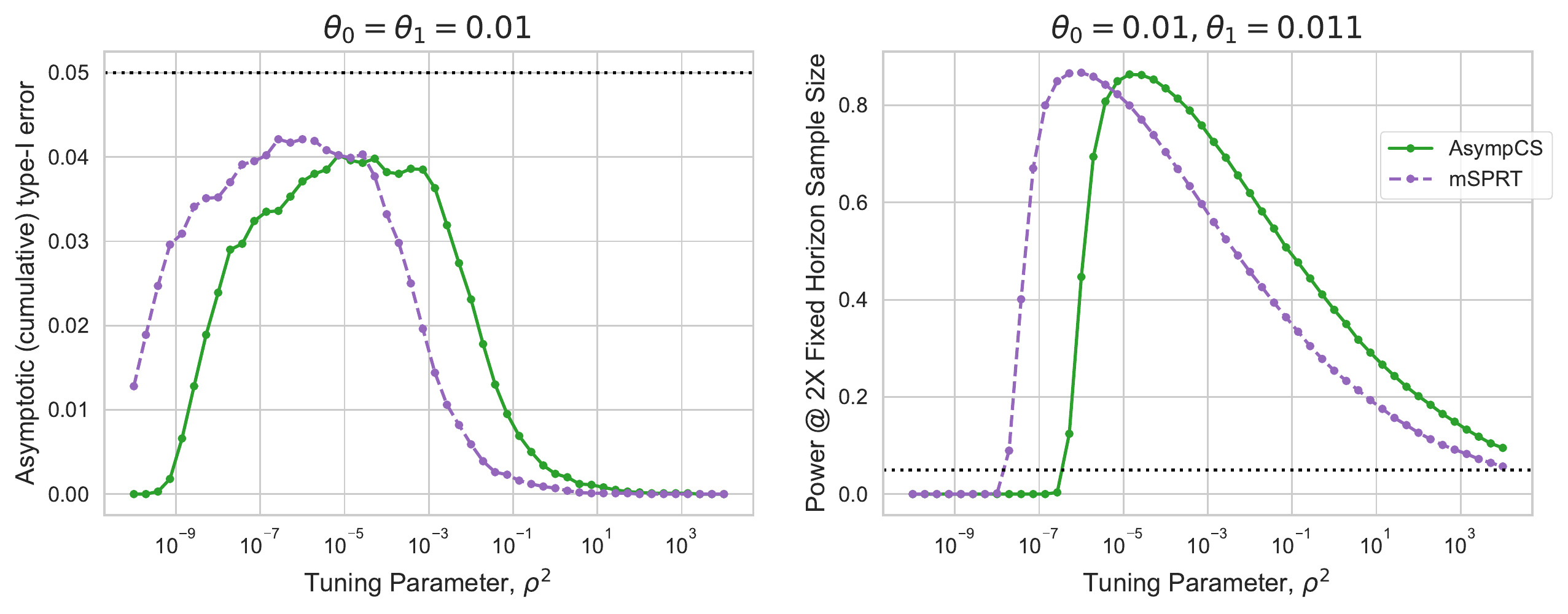}
\caption{Effect of the tuning parameter $\rho^2$ on asymptotic type-I error (left) and Power at 2x the fixed horizon stopping time (right) for the AsympCS and mSPRT methods . }
\label{fig:tuning_parameter}
\end{figure}
\begin{figure}[t]
\centering
\includegraphics[width = 0.5\linewidth]{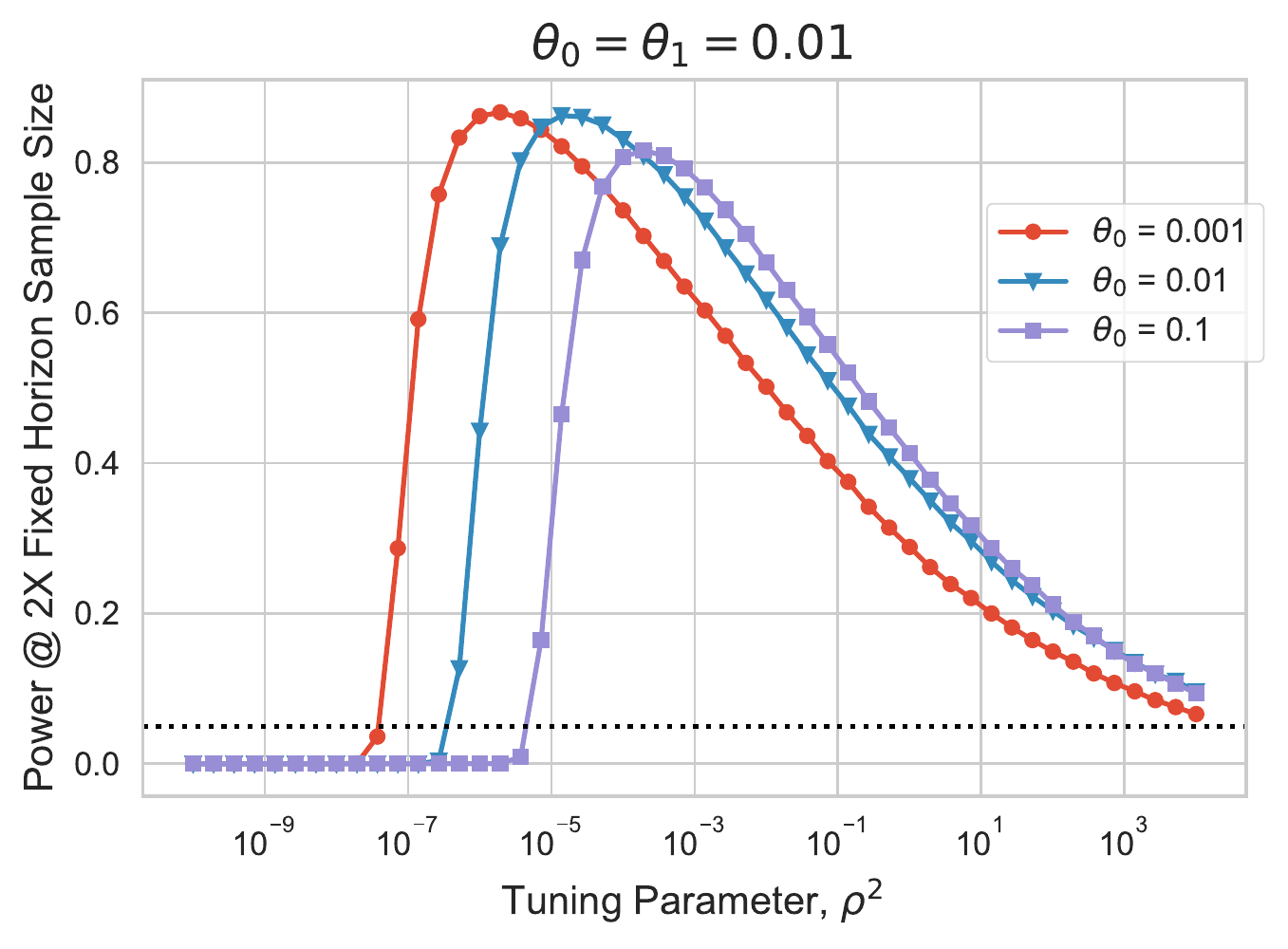}
\caption{Effect of the tuning parameter $\rho^2$ on Power (at 2x Fixed Horizon Sample Size) for different baseline conversion rates. We see that power dramatically falls below some critical value of $\rho^2$, so in practice, we have chosen a conservative $\rho^2 = 10^{-3}$ in production.}
\label{fig:tuning_parameter_power}
\end{figure}
\subsection{Empirical Distribution of Sample Size Ratios}\label{sec:sample-size-ratio}
Figure \ref{fig:sample_size_ratio} is our effort to recreate Figure 4 from \cite{johari2017peeking}. The construction of such a ``counterfactual'' graph based on empirical data is non-trivial: of course, we cannot run the real tests for longer than customers actually did. 

So, we instead construct this graph as follows: we take each of the $2089$ A/B tests in our sample, and extract the final observed difference in conversion rates (\textit{i.e.}, the final treatment effect $\hat{\mu_1} - \hat{\mu_0}$) for that experiment. We then use this as input to a sample size calculator with 80\% power and Type I error control of 5\%, and an assumed MDE that is exactly equal to $\hat{\mu_1} - \hat{\mu_0}$. The output of the sample size calculation is then a hypothetical ``perfectly calibrated fixed horizon sample size''. For each such real experiment, we then simulate $1000$ hypothetical experiments where the actual treatment effect is equal to this ``real one''. We then find the 80th percentile stopping time of these simulated experiments, and record the ratio of that 80th percentile stopping time to the Fixed Horizon stopping time at 80\% power. Figure \ref{fig:sample_size_ratio} plots the histograms of those ratios (green line). 

For the green line, the MDE that is used in the sample size calculator is exactly the observed treatment effect in the real experiment. We see that always valid methods are universally slower than Fixed Horizon methods, with runtimes between 1x and up to 4x the ``perfectly calibrated'' fixed horizon sample size. Of course, customers are rarely able to guess the true effect size in advance. For blue and orange lines, we assume that the user has underestimated the true effect, \textit{i.e.}, the MDE input to the sample size calculator is lower by 30\% or 50\%. This is the ``blockbuster'' effect scenario, \textit{i.e.}, the true effect is larger than the customer hoped for. In such cases, AsympCS method are often as fast, or faster than Fixed Horizon methods.

\subsection{Behavior of Bayesian Hypothesis Tests}\label{sec:bayesian}
As discussed in Section \ref{sec:t1error} of the main text, Bayesian methods are not designed to control Type-I error. In this section, we evaluate the Bayesian method against its stated goal - of bounding the expected loss.  The Bayesian methodology we have considered allows for a data dependent stopping rule, that compares expected loss under the posterior distributions to a ``threshold of caring'', $\varepsilon$. To show that this procedure works empirically, we adopted a Bayesian style simulation for a single armed analogy of an A/B test, where we simulate draws from a single arm, with the data distribution $\hat{\theta_0} \sim \text{Beta}(100,100)$ (which has its $95$\% CI from $\theta=0.44$ to $0.56$). This is compared against a fixed baseline conversion rate of $0.5$. The single arm setting is particularly amenable to simulations, since the loss function can be computed analytically.

For our modification of the A/B test, we will adapt their methods to the following probability model:

\begin{align}
\text{prior,}\quad\theta &\sim \text{Beta}(a_0, b_0)\\
\text{observations,}\quad x_i &\sim \text{Bernoulli}(\theta)\\
\text{likelihood,}\quad x_{1:n} &\sim \text{Binomial}(n,\theta)
\end{align}
where we will tune the free parameters of the prior $a_0$ and $b_0$. We will plot credible intervals for data drawn from this distribution, \textit{i.e.}, after having observed data  $x_{1:n}$, where there are $c$ successes, from $n$ samples, we have the posterior distribution

$$
\theta | x_{1:n} \sim \text{Beta}(a_0 + c, b_0 + n - c)
$$

The 95\% \emph{Credible Interval} is given by computing the 5\% and 95\% tails of this distribution. 
The stopping rule after having observed $n$ samples (\textit{i.e.}, with data $x_{1:n}$) is based on comparing the expected loss to a threshold
\begin{align}
\E[\max(\theta_0 - \theta, 0)| x_{1:n}] < \tau \quad \text{if } \hat{\theta} > \theta_0\\
\E[\max(\theta - \theta_0, 0)| x_{1:n}] < \tau \quad \text{if } \hat{\theta} \le \theta_0
\end{align}
where $\varepsilon$ is some pre-specified threshold of caring, and $\theta_0$ is the baseline conversion rate against which we are making the comparison. If either of these inequality tuplets is satisfied, the test is stopped. 

As mentioned previously, it turns out that this loss function can be computed analytically for a test with just one arm being compared to a baseline. For a posterior distribution with parameters $\alpha$ and $\beta$, the first half of the loss function has the form:

\begin{align}
\E[\max(\theta_0 - \theta, 0)| x_{1:n}] &= \int^{1}_{0}  \max(\theta_0 - \theta, 0)\frac{\theta^{\alpha-1}(1-\theta)^{\beta -1}}{B(\alpha, \beta)}d\theta\\
&= \int^{\theta_0}_{0}(\theta_0 - \theta)\frac{\theta^{\alpha-1}(1-\theta)^{\beta -1}}{B(\alpha, \beta)}d\theta  \\
&= \theta_0\int^{\theta_0}_{0} \frac{\theta^{\alpha-1}(1-\theta)^{\beta -1}}{B(\alpha, \beta)}d\theta - \frac{B(\alpha + 1, \beta)}{B(\alpha, \beta)}\int^{\theta_0}_{0} \frac{\theta^{(\alpha + 1) - 1}(1-\theta)^{\beta -1}}{B(\alpha + 1, \beta)}d\theta\\
&= \theta_0 I_{\theta_0}(\alpha, \beta) - \frac{\alpha}{\alpha + \beta} I_{\theta_0}(\alpha + 1, \beta)
\end{align}
where $I_{x}(a,b)$ is the *regularized incomplete Beta function*. Similarly for the other loss function we have

$$
\E[\max(\theta - \theta_0, 0)| x_{1:n}] = \frac{\alpha}{\alpha + \beta}I_{1-\theta_0}(\beta, \alpha + 1) - \theta_0I_{1-\theta_0}(\beta,\alpha)
$$

In Figure \ref{fig:bayesianloss}, we show the empirical/observed loss at the stopping time in 5,000 simulations with 1000 peeks spaced logarithmically equally in the range $100$ to $10^8$ visitors. The plots shows two different Bayesian models - one where the prior is perfect, and accurately captures the data distribution, and one where the prior is an uninformative or ``flat'' prior of Beta(1,1). With the Bayesian decision rule that allows for data dependent stopping, we see that the empirical expected loss is exactly our threshold of caring when the prior is accurate. When the prior is mis-specified, the empirical loss exceeds the threshold that was set.

\begin{figure}
\centering
\includegraphics[width=0.5\linewidth]{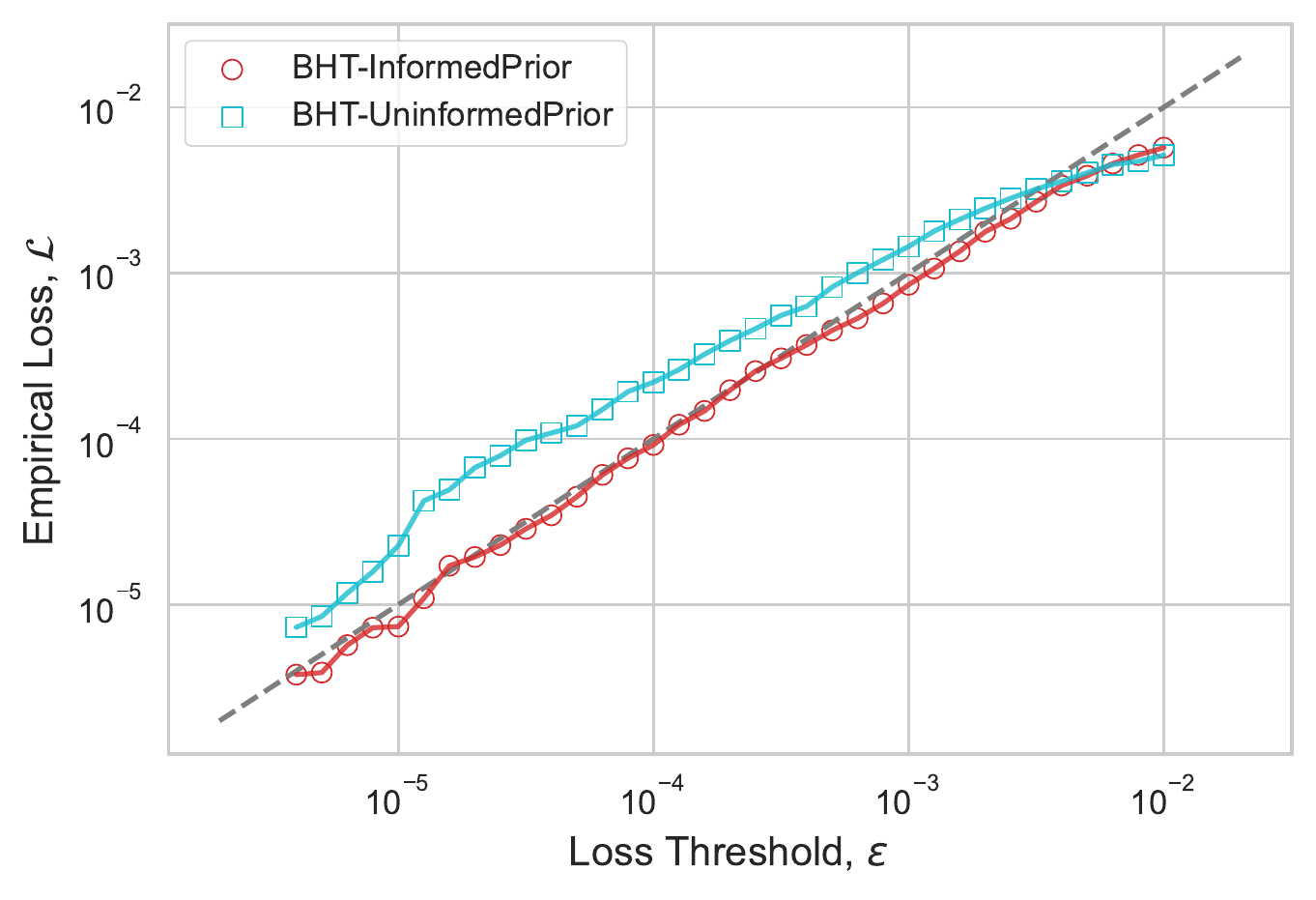}
\caption{Empirical Loss of Bayesian method  vs. threshold of caring in a simulated single armed test where the conversion rate is drawn from a Beta(100, 100) distribution. The blue points are empirical losses with a prior that is equal to the true data distribution, while the brown points are for a flat ``uninformative'' prior. When the prior accurately captures the data distribution, the empirical loss is exactly as we would expect.}
\label{fig:bayesianloss}
\end{figure}

\subsection{Comparisons of Anytime-valid methods and Bayesian Hypothesis Tests}\label{sec:comparison-anytime-bayesian}
In the main text, we showed a few frequentist properties of Bayesian methods. In this section, we provide a few more detailed comparisons of anytime valid methods and Bayesian Hypothesis tests, including several ``Bayesian style" comparisons. Note, we do not include Bayes Factor methods, since these have been covered extensively by \citet{deng2016continuous}.

\subsubsection{Miscoverage probability at stopping time}
\label{sec:miscoverage}
One very natural question is to examine miscoverage probabilities \textit{at the stopping time} for a given methodology. While AsympCS and mSPRT methods stop when their confidence sequences exclude the null, the Bayesian hypothesis test stops when its loss function drops below a threshold of caring $\varepsilon$. 

\begin{figure}
\centering
\includegraphics[width=0.6\linewidth]{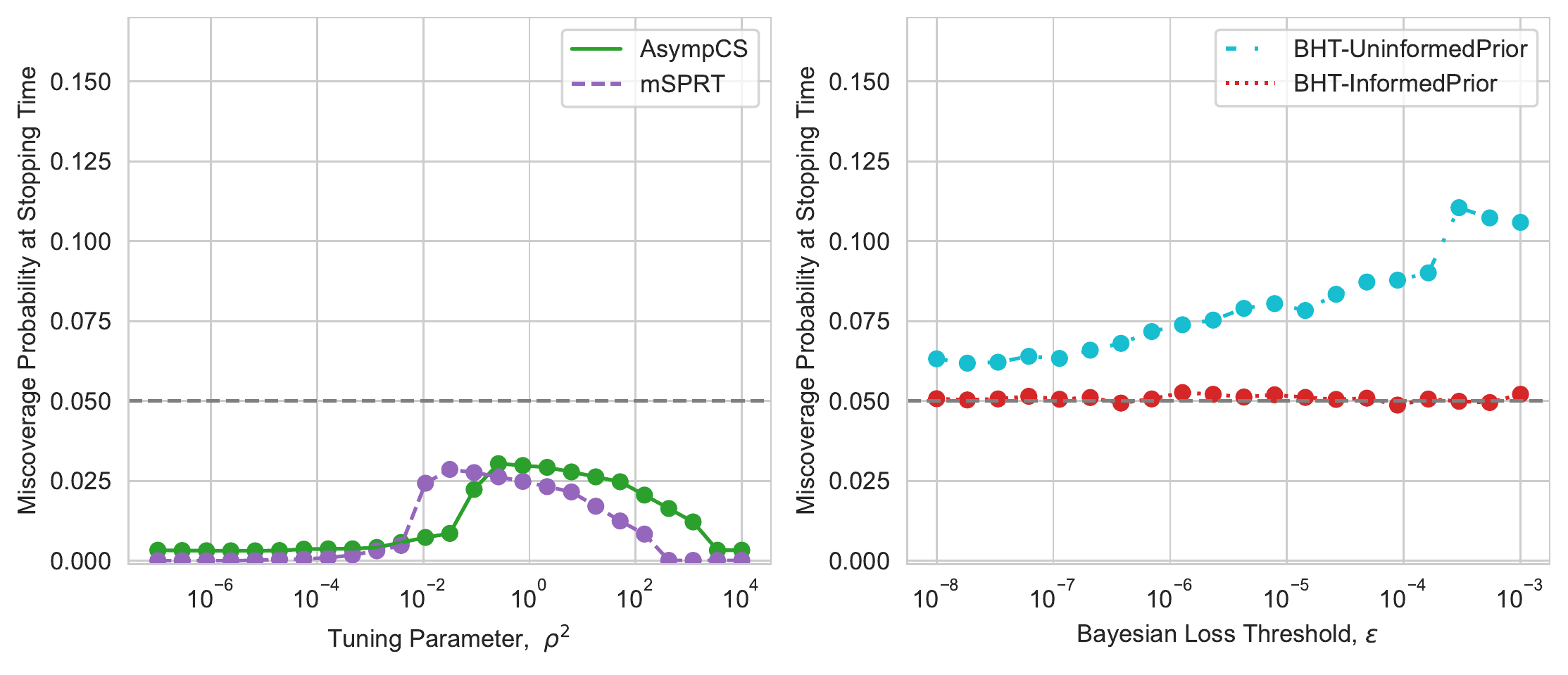}
\caption{Miscoverage probability \textit{at the stopping time} for $95\%$ Sequential Confidence Sequences (top) and $95\%$ Bayesian Credible Intervals (lower), as the respective tuning parameter of each methodology is varied. For Sequential methods, miscoverage error at stopping time is strictly below the $5\%$ bound, while for Bayesian methods, any mis-specification of the prior results in violation of bounds}
\label{fig:miscoverageatstopping}
\end{figure}

Figure \ref{fig:miscoverageatstopping} presents the variations in the miscoverage error at the stopping time as a function of each method's tuning parameters, for $10^4$ simulations with data drawn from a Beta(100,100) prior. The setting here is the ``single-arm'' test, where we compare a single arm's conversion rate to a baseline of $0.5$. A \textit{miscoverage} is defined as a scenario where the 95\% credible interval (CI) / confidence sequence (CS) at the stopping time, does not include the true parameter. 

The Sequential methods have strong worst case guarantees on their miscoverage error (\textit{i.e.}, it is always below $5\%$), while Bayesian CIs are well calibrated if the data matches the prior (exactly $5\%$ error). However, if the prior is inappropriate, this miscoverage error is not controlled. The general lesson here is that the anytime-valid methods allow for stopping at any time, and the inference at those times is unambiguous, with strong coverage guarantees. On the other hand, Bayesian methods provide coverage guarantees \textit{at their stopping time, and only if the prior is appropriate}.

\subsubsection{Power and Stopping time}
The next natural question is to compare the power and stopping time of these methods. This is shown in Figure~\ref{fig:powerstoppingtimecomparison}, where we see that Bayesian methods stop slightly faster than anytime valid methods. The simulation setup is identical to Section \ref{sec:miscoverage}. 

We note however that Bayesian methods with misspecified priors can never achieve the same miscoverage guarantees as Anytime valid methods. In fact, the tightest miscoverage guarantees come at very low loss thresholds (seen in Figure ~\ref{fig:miscoverageatstopping}), and for these thresholds, the 80th percentile stopping times are similar to Anytime Valid methods (top panel of Figure \ref{fig:powerstoppingtimecomparison}).  

\begin{figure}
\centering
\includegraphics[width=0.7\linewidth]{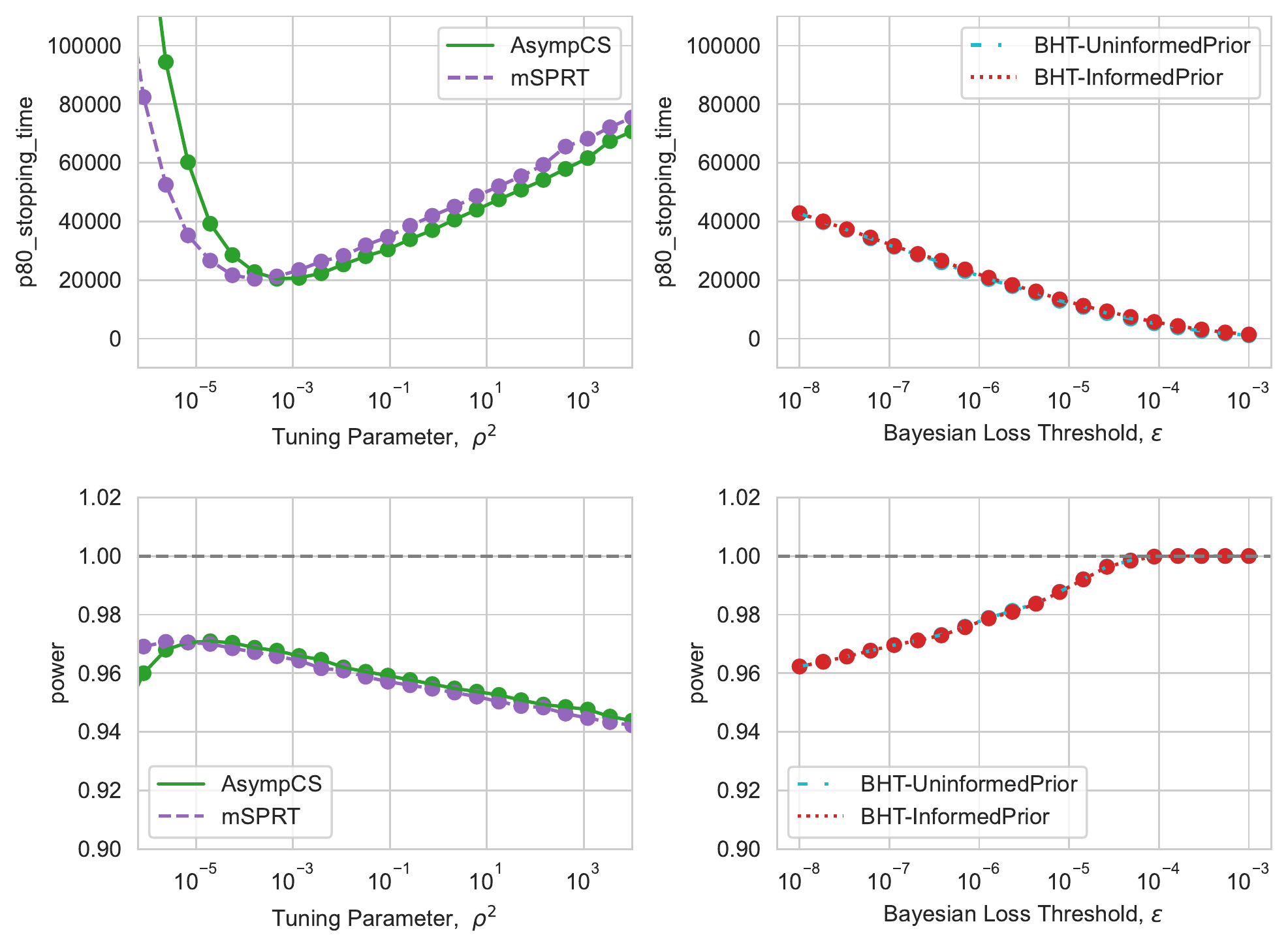}
\caption{A comparison of the 80th percentile stopping time (top) and Power (lower) for Anytime Valid methods vs. Bayesian Hypothesis tests. We see that Bayesian methods stop faster (in general), and as a result, have slightly higher power.}
\label{fig:powerstoppingtimecomparison}
\end{figure}

\subsubsection{Calibration of methods}
The final comparison we make is a calibration plot - comparing the ``inferred" vs. true effect sizes at the stopping time for each method. We again simulated 10,000 experiments, with effect sizes drawn from a Beta(100, 100) prior, and ran each test for each methodology until the stopping criterion was satisfied. The observed effect size at the stopping time is the ``inferred" effect, and this can be plotted against the true effect for that simulation run. The results of each of these 10,000 plots in shown in the scatter plots on the left column of Figure ~\ref{fig:calibration}. 

To interpret this data, we take both vertical and horizontal ``binned" slices. The middle column of Figure ~\ref{fig:calibration} shows a forward inferencing viewpoint - given the true effect size ($x$-axis), what effect size did we actually observe? We see that the anytime valid methods (AsympCS and mSPRT) are perfectly calibrated from this viewpoint. 

Next, from a backwards inferencing standpoint, \textit{i.e.}, \emph{given the data}, how does our inferred effect correspond to the parameter values that could have produced this data? The third column of Figure \ref{fig:calibration} shows these results. We see that only a BHT with an informed prior (that exactly matches the data generation process) is perfectly calibrated in this scenario. All other methods have some bias in this analysis, though anytime valid methods are better than other methods. Importantly, Bayesian methods with uninformed priors are poorly calibrated in both directions. 

\begin{figure}
\centering
\includegraphics[width=0.65\linewidth]{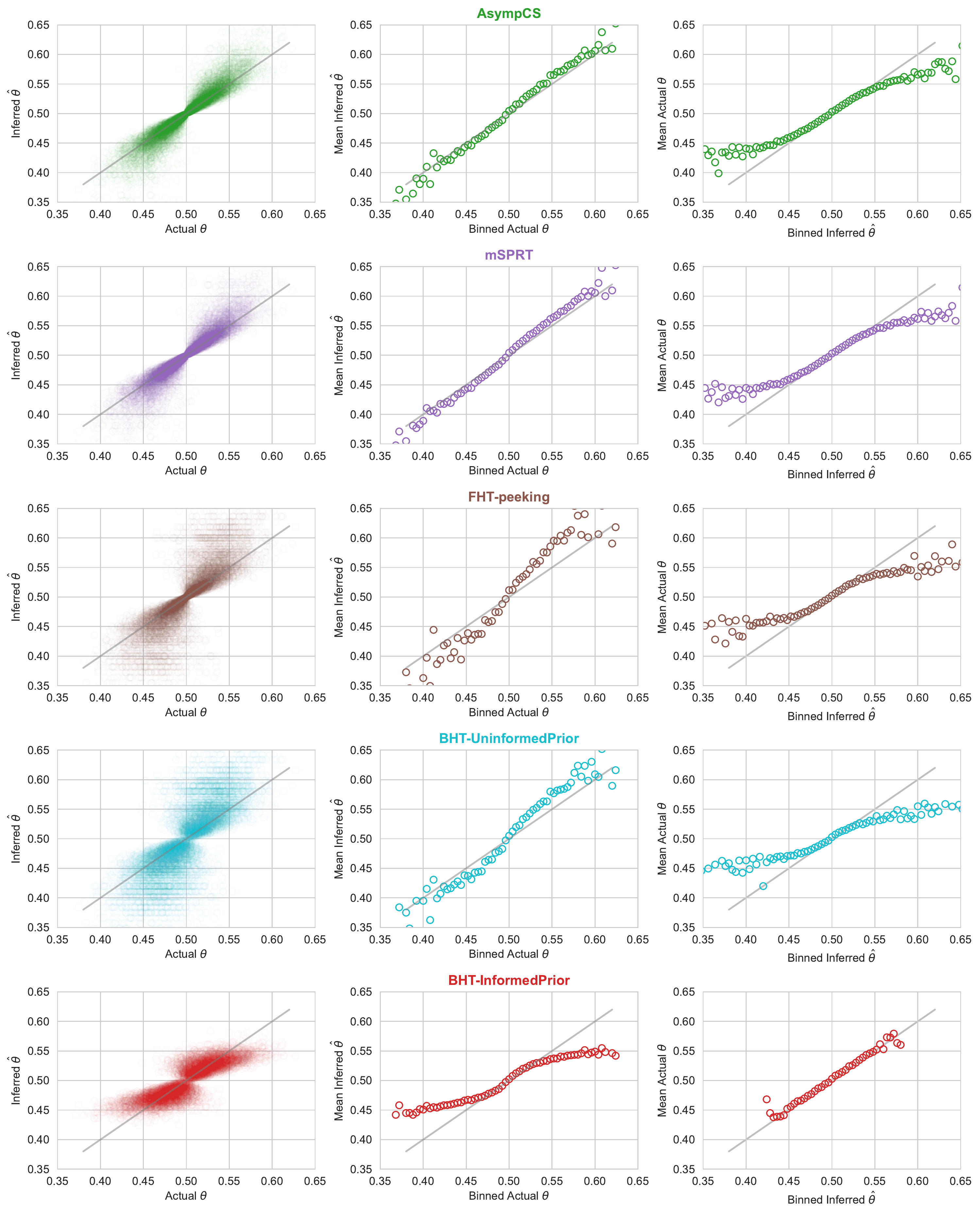}
\caption{Calibration Plots for various statistical methods in a single arm test (Left Column) Scatter Plots of Inferred/Posterior effect sizes vs. true effect sizes for 10,000 experiments each with a true effect drawn from a Beta(100, 100) distribution, and stopped according to the appropriate stopping rule of the method. (Middle Column) Calibration plots of inferred effect size vs. true effect size. (Right Column) Calibration plots of actual effect size given the data (i.e. given the inferred effect size)}
\label{fig:calibration}
\vspace{-10mm}
\end{figure}

\subsubsection{Conclusions}
The general lesson here is that the anytime-valid methods allow for stopping at any time, and the inference at those times is unambiguous, with strong coverage guarantees. On the other hand, Bayesian methods provide coverage guarantees \textit{at their stopping time \textbf{only} if the prior is appropriate}.

\end{document}